 \theoremstyle{plain}
 \newtheorem{thm}{Theorem}[section]
 \newtheorem{prop}[thm]{Proposition}
 \newtheorem{lem}[thm]{Lemma}
 \theoremstyle{definition}
 \theoremstyle{remark}
 \newtheorem{remark}[thm]{Remark}
 \let\b=\beta
 \let\e=\varepsilon
 \let\lam=\lambda
 \let\f=\frac
 \let\wt=\widetilde
 \let\td = \tilde
 \let\wh=\widehat
 \let\teq \triangleq
 \def\cF{{\cal F}}
\def\lf{\lfloor}
\def\rf{\rfloor}
 \def\lt{\left}
 \def\rt{\right}
 \def\th{\theta}
 \def\Cov{\mathrm{Cov}}
\def\diag {\mathrm{diag}}
\def\sign{\mathrm{sign}}
\def\Var{\mathrm{Var}}
 \newcommand{\beq}{\begin{equation}}
 \newcommand{\eeq}{\end{equation}}
  \newcommand{\bal}{\begin{aligned} }
  \newcommand{\eal}{\end{aligned}}
 \newcommand{\ben}{\begin{eqnarray}}
 \newcommand{\een}{\end{eqnarray}}
 \newcommand{\beno}{\begin{eqnarray*}}
 \newcommand{\eeno}{\end{eqnarray*}}
 \newcommand{\one}{\mathbf{1}}
 \newcommand{\BB}{\mathbf{B}}
 \newcommand{\CC}{\mathbf{C}}
  \newcommand{\EE}{\mathbf{E}}
 \newcommand{\II}{\mathbf{I}}
 \newcommand{\MM}{\mathbf{M}}
 \newcommand{\PP}{\mathbf{P}}
 \newcommand{\bS}{\mathbf{S}}
 \newcommand{\UU}{\mathbf{U}}
 \newcommand{\tr}{\mathrm{Tr}}
    \DeclareMathOperator*{\argmin}{argmin}
    \date{\today}
    \author{Jiajie Chen \footnote{ Applied and Computational Mathematics, Caltech; Email: jchen@caltech.edu}, Anthony Hou \footnote{Department of Statistics, Harvard University; Email: ahou@college.harvard.edu}, Thomas Y. Hou \footnote{Applied and Computational Mathematics, Caltech; Email: hou@cms.caltech.edu}}
    \title{A Pseudo Knockoff Filter for Correlated Features}
\begin{document}
 \maketitle{}
 \begin{abstract}
 In \cite{Candes}, the authors introduced a new variable selection procedure called the knockoff filter to control the false discovery rate (FDR) and proved that this method achieves exact FDR control. Inspired by the work of \cite{Candes}, we propose  a pseudo-knockoff filter that inherits some advantages of the original knockoff filter and has more flexibility in constructing its knockoff matrix. Moreover, we perform a number of numerical experiments that seem to suggest that the pseudo knockoff filter with the half Lasso statistic has FDR control and offers more power than the original knockoff filter with the Lasso Path or the half Lasso statistic for the numerical examples that we consider in this paper. Although we cannot establish rigorous FDR control for the pseudo knockoff filter, we provide some partial analysis of the pseudo knockoff filter with the half Lasso statistic and establish a uniform FDP bound and
an expectation inequality.
 \end{abstract}
 \section{Introduction}

 In many applications, we need to study a statistical model that consists of a response variable and a large number of potential explanatory variables and determine  which variables are truly associated with the response. In \cite{Candes}, Barber and Cand\`es introduce the knockoff filter to control the FDR in a statistical linear model. More specifically, the knockoff filter constructs knockoff variables that mimic the correlation structure of the true feature variables to obtain exact FDR control in finite sample settings. It has been demonstrated that this method has more power than existing selection rules when the proportion of null variables is high.

 \subsection{A brief review of the knockoff filter}
 Consider the following linear regression model $ y=X\beta+ \epsilon$ where the feature matrix $X$ is a $n\times p$ ($n \geq 2 p$) matrix with full rank, its columns have been normalized to be unit vectors in the $l^2$ norm, and $\epsilon$ is a Gaussian noise $N(0,\sigma^2I_n)$. The knockoff filter begins with the construction of a knockoff matrix $\tilde{X}$ that obeys
\begin{equation}\label{revko1}
   \tilde{X}^T\tilde{X} = X^TX, \quad \tilde{X}^T X = X^T X - \diag(s) ,
 \end{equation}
   where $s_i \in [0,1],i=1,2,...,p$. The positive definiteness of the Gram matrix $[X \tilde{X}]^T[X \tilde{X}]$ requires
   \begin{equation}\label{revko2}
   \diag(s) \preceq 2X^TX.
   \end{equation}
 The first condition in (\ref{revko1}) ensures that $\tilde{X}$ has the same covariance structure as the original feature matrix $X$. The second condition in (\ref{revko1}) guarantees that the correlations between distinct original and knockoff variables are the same as those between the original variables.
The power (the expected proportion of true discoveries) of the knockoff filter depends critically on the value of $s_i$. A general guideline in constructing the knockoff matrix is to choose $s_j$ as large as possible to maximize the difference between $X_j$ and its knockoff $\tilde{X}_j$. Next, we choose a statistic, $W_j$, for each pair $X_j,\tilde{X}_j$ by using the Gram matrix $[X \ \tilde{X}]^T [X \tilde{X}]$ and the marginal correlation $[X \ \tilde{X}]^T y$. In addition, $W_j$ satisfies a flip-coin property that swapping arbitrary pair $X_j, \td{X}_j$ only changes the sign of $W_j$ but keeps the sign of other $W_i$ ($i \neq j$)  unchanged. The construction of the knockoff features and the symmetry of the test statistic ensure that the signs of the $W_j$'s are i.i.d. random for the ``null hypotheses''. This property plays a crucial role in obtaining exact FDR control by using a supermartingale argument.

One of the knockoff statistics considered in \cite{Candes} is the Lasso path statistic, which is defined as $W_j = \max(Z_j , \tilde{Z}_j) \cdot \sign(Z_j - \tilde{Z}_j)$, where $Z_j$ and $\tilde{Z}_j$ are the solutions of the Lasso path problem given below:
 \[
   \begin{aligned}
        &(\hat{\beta}(\lambda),\tilde{\beta}(\lambda)) =\argmin_{ (b,\tilde{b}) } \left\{ \frac{1}{2} ||y - Xb -\tilde{X}\tilde{b}  ||_2^2 + \lambda (|| b||_1 + ||\tilde{b} ||_1) \right\} \; , \\
        & Z_j = \sup\{ \lambda : \hat{\beta}_j(\lambda) \ne 0 \},\  \tilde{Z}_j = \sup\{ \lambda:  \tilde{\beta}_j(\lambda) \ne 0 \}.\\
   \end{aligned}
   \]

 If $X_j$ is a nonnull, it has a non-trivial effect on $y$ and should enter the model earlier than its knockoff $\td{X}_j$, resulting in a positive $W_j$.
A large positive $W_j$ implies that there is a high probability that the variable $j$ is a nonnull. This consideration suggests that we select the variable $j$ with positive $W_j$ larger than a data-dependent threshold $T$, $\hat{S} \triangleq \{j: W_j \geq T  \}$, where $T$ is defined below
 \begin{equation}\label{def-T0}
   \begin{aligned}
   &T \triangleq   \min \left\{ t>0 :        \frac{ 1+  \#\{   j: W_j   \leq -t              \}  }{      \#\{   j: W_j   \geq t\}   \vee 1   }  \leq q  \right\}  \; . 
  \end{aligned}
 \end{equation}
 The false discovery proportion (FDP) of the knockoff  filter and its estimate at threshold $t$ are given by
 \beq\label{def:FDP}
 FDP(t) \teq   \frac{  \#\{   j : W_j  \geq t  \  \& \  \b_j = 0          \}  }{      \#\{   j: W_j   \geq t\}   \vee 1   }, \quad \widehat{FDP}(t) \teq       \frac{ 1+  \#\{   j: W_j   \leq -t              \}  }{      \#\{   j: W_j   \geq t\}   \vee 1   } .
 \eeq

 The FDR is the expectation of FDP. The i.i.d signs for the null $W_j$ enables one to construct a supermartingale $M_t$ with respect to an appropriate backward filtration $\cF_t$ such that
 \beq\label{eq:mart}
 \f{ FDP(t)} { \wh{ FDP(t)}} \leq  \f{     \#\{   j : W_j  \geq t  \  \& \  \b_j = 0\}        }{    1+  \#\{   j: W_j   \leq -t  \ \& \ \b_j = 0             \}   } \triangleq  M_t,  \quad  E [M_t] \leq 1.
 \eeq
 The threshold $T$ defined in \eqref{def-T0} gives a stopping time. Using the definition of $T$ and the stopping time theorem, the authors in \cite{Candes} obtained
 $ E\lt[ FDP(T) / q   \rt]  \leq  E\lt[    FDP(T) /  \wh{ FDP(T)} \rt] \leq E[ M_T]  \leq 1 $.
 The main result in \cite{Candes} is that the knockoff procedure controls the FDR
 \[
  FDR \triangleq  E[ FDP(T)   ] \leq q \; .
  \]

In a subsequent paper \cite{Candes2}, Barber and Cand\`es developed a framework for high-dimensional linear model with $p \ge n$. The knockoff filter has been further generalized to the model-free framework in \cite{Candes3}. The model-free knockoffs provide valid inference from finite samples in settings in which the conditional distribution of the response is arbitrary and completely unknown. This research has inspired a number of follow-up works, such as \cite{Tib,Barber,Su1,Su2, CHH16}. There are several other feature selection methods that offer some level of FDR control (e.g. \cite{BHq1,BHq2,Miller84,Miller02,GSell13,Liu10,Meinshausen10}). We refer to \cite{Candes} for a thorough comparison between the knockoff filter and these other approaches.

\subsection{Pseudo knockoff filter}
In this paper, we propose a pseudo-knockoff filter that inherits some advantages of the original knockoff filter and have greater flexibility in constructing their pseudo-knockoff matrix.
The first condition that we impose on the pseudo knockoff matrix is the following orthogonality condition:
   \begin{equation}\label{pko:ort}
          (X + \tilde{X})^T (X - \tilde{X} )  = 0   .
 \end{equation}
 It can be shown that this condition is equivalent to $X^TX = \td{X}^T \td{X}, \ X^T \td{X} = \td{X}^T X$.

We consider three classes of pseudo knockoffs that have different additional constraint. For the first class of pseudo knockoff filters, the pseudo knockoff matrix $\tilde{X}$ is chosen to be orthogonal to $X$, i.e. $X^T\tilde{X}=\tilde{X}^T X=0$. We call this pseudo knockoff the \textit{orthogonal pseudo knockoff}. It maximizes the difference between the pseudo knockoff matrix $\tilde{X}$ and its original design matrix $X$. The orthogonality condition makes $X_j$ and its knockoff orthogonal regardless of the correlation structure of $X$.

 The second class of pseudo knockoff filters is called the \textit{block diagonal pseudo knockoff}. We begin by constructing a block diagonal matrix $\BB$ that satisfies the property $\BB \succeq \Sigma^{-1}$. We can then solve for $\tilde{X}$ from the relationship $\BB = 4[(X - \tilde{X})^T(X - \tilde{X})]^{-1} $ where $\BB = 2 \diag( S^{-1}_{11},S^{-1}_{22},...,S^{-1}_{kk})$. The condition \eqref{pko:ort} and $ 4[(X - \tilde{X})^T(X - \tilde{X})]^{-1}  = 2 \diag( S^{-1}_{11},S^{-1}_{22},...,S^{-1}_{kk})$ imply that
 \[
 X^T X = \td{X}^T \td{X} , \quad  X^TX - X^T\td{X}   = \diag( S_{11},S_{22},...,S_{kk}).
 \]

We construct $\BB$ by adapting it to the structure of $X$. One of the guiding principles is to make it as small as possible so that we maximize the difference between $X$ and $\td{X}$.

 The third class of the pseudo knockoff filter is called the general pseudo knockoff by constructing $\BB$ whose principal submatrices are diagonal. The construction is similar to the case when $\BB$ is a block diagonal matrix.

 \subsection{A half Lasso statistic}

 We propose to use a half penalized method to construct the statistics of our pseudo knockoff filter. More specifically,
  the pseudo knockoff statistic is based on the solution of the following half penalized optimization problem
 \begin{equation}\label{introhalf}
  \min_{\hat{\beta},\tilde{\beta}}  \frac{1}{2} || y- X \hat{\beta} -\tilde{X}\tilde{\beta}||_2^2 + P( \hat{\beta}+  \tilde{\beta} ),
 \end{equation}
 where $P(x)$ is an even non-negative and non-decreasing function in each coordinate of $x$. An important consequence of the orthogonality condition (\ref{pko:ort}) is that we can reformulate the half penalized problem into two sub-problems equivalently
 \beq\label{eq:subprob}
    \min_{{\hat{\beta}+\tilde{\beta}}}  \left\{\frac{1}{2} || \frac{X+\tilde{X}}{2} (\hat{\beta}^{ls} + \tilde{\beta}^{ls}-  \hat{\beta} -\tilde{\beta}) ||_2^2  + P( \hat{\beta}+ \tilde{\beta} ) \right\} + \min_{\hat{\beta}-\tilde{\beta} } \left\{ \frac{1}{2} || \frac{X-\tilde{X}}{2} (\hat{\beta}^{ls} - \tilde{\beta}^{ls}- ( \hat{\beta} -\tilde{\beta})) ||_2^2
    \right\},
 \eeq
 where $\hat{\beta}^{ls}$ and $\tilde{\beta}^{ls}$ are the least squares coefficients by regressing $y$ on the augmented feature matrix $[X,\;\tilde{X}]$.
If we choose $P= \lambda || \cdot ||_{l^1}$, we obtain a \textit{half Lasso} method. We will mainly focus on the half Lasso statistic in this paper.
 Once we solve the half penalized problem, we can construct the pseudo knockoff statistic as follows
\[
    W_j \triangleq (\hat{\beta}_j + \tilde{\beta}_j) \cdot \sign( \hat{\beta}_j - \tilde{\beta}_j ) \textrm{ or } W_j =  \max{ (  | \hat{\beta}_j| , |\tilde{\beta}_j| ) }\cdot \sign( | \hat{\beta}_j| - |\tilde{\beta}_j | ).
    \]
We then apply a procedure similar to the knockoff filter \eqref{def-T0} to select features.

We have carried out a number of numerical experiments for different design matrices with various correlation structures to test the performance of the three classes of pseudo knockoff filters and compare their performance with that of the knockoff filter. For the examples that we consider in this paper, our numerical experiments indicate that all three classes of pseudo knockoff filters with the half Lasso statistic have FDR control. Moreover, the orthogonal and the general pseudo knockoff filter seem to offer more power than that of the knockoff filter with the Lasso Path or the half Lasso statistic, especially when the features are highly correlated.

\subsection{Uniform FDP bounds}
There has been some recent progress in obtaining uniform FDP bounds in \cite{KatSab17,KatRam18}.
Using \eqref{def-T0}, \eqref{def:FDP} and \eqref{eq:mart}, we can divide the control of FDR into three steps. First of all, we construct an estimate of $FDP$. We then choose a data-dependent threshold $T$ that achieves some adaptivity. The final step is to obtain an estimate for $E[ FDP(T) / \wh{FDP(T)} ] $ for this adaptive threshold, $T$. In \cite{KatRam18}, the authors showed that the above strategy of controlling FDR provides a general strategy for a variety of existing procedures that offer FDR control under some assumptions. In \cite{KatSab17}, the authors established a uniform bound across all possible threshold for the knockoff filter
\beq\label{eq:kfunif}
E \lt[
\sup_{t > 0}  \f{ FDP(t)} { \wh{ FDP(t)}} \rt]  \leq  E\lt[ \sup_{t > 0}  \f{     \#\{   j : W_j  \geq t  \  \& \  \b_j = 0\}        }{    1+  \#\{   j: W_j   \leq -t  \ \& \ \b_j = 0             \}   }
  \rt] \leq 1.93.
\eeq
In \cite{KatRam18}, the above uniform FDP bounds are established for several FDR procedures under some independence assumption similar to the
\textit{ i.i.d signs for the nulls} in the knockoff filter.

Inspired by the work of \cite{KatSab17,KatRam18}, we establish a uniform FDP bound under an assumption weaker than the independence assumption on the conditional distribution of the statistic $W$. Specifically, we prove the following theorem.
  \begin{thm}\label{thm:main}
 Let $\cF$ be a $\sigma$ field that satisfies:  (a) $| W_i|$ is $\cF$ measurable for all null $i$; (b) conditional on $\cF$,  $W_{S_0}$ can be divided into $m$ groups $C_1, C_2,..., C_m\ (C_i \subset S_0)$ such that the elements of $\sign(W_{C_i} )$ are mutually independent with $P( \sign(W_j )=1) = P( \sign(W_j) = -1  ) = 1/2$ for $j \in C_i$. For any $t > 0$, we have
  \beq\label{pk:ineqexp}
   E \left[ \frac{ \# \{  j \in S_0 : W_j \geq t \} } {      \# \{  j \in S_0 : W_j \leq - t \} + m }  \Big|  \cF   \right] \leq 1 \; .
   \eeq
 Moreover, if $W_{S_0}$ further satisfies $W_{S_0}  \overset{d}{=} - W_{S_0}$ conditional on $\cF$, we have
 \beq\label{pk:unif}
   E \left[ \sup_{t > 0} \frac{ \# \{  j \in S_0 : W_j \geq t \} } {      \# \{  j \in S_0 : W_j \leq - t \} + m }  \Big|  \cF   \right] \leq 3.9.
 \eeq
  \end{thm}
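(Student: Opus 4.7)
For \eqref{pk:ineqexp} the plan is to condition on $\cF$ and reduce to a binomial computation. Conditional on $\cF$, assumption (a) makes $|W_j|$ deterministic for $j\in S_0$, and assumption (b)---together with the natural conditional independence of the $m$ groups---makes $(\sign(W_j))_{j\in S_0}$ a collection of iid Rademacher variables. Writing $V^+(t):=\#\{j\in S_0 : W_j\ge t\}$, $V^-(t):=\#\{j\in S_0 : W_j\le -t\}$, and $N(t):=V^+(t)+V^-(t)$, this gives $V^+(t)\sim\mathrm{Bin}(N(t),1/2)$ conditional on $\cF$. I would then use the integral identity $\f{1}{V^-(t)+m}=\int_0^1 u^{V^-(t)+m-1}du$ and Fubini to write
\[
E\lt[\f{V^+(t)}{V^-(t)+m}\,\Big|\,\cF\rt] = \int_0^1 u^{m-1}\, E\lt[V^+(t)\,u^{V^-(t)}\,\Big|\,\cF\rt] du,
\]
compute the inner expectation by the elementary binomial identity $E[V^+(t)\,u^{V^-(t)}\mid\cF]=N(t)\,2^{-N(t)}(1+u)^{N(t)-1}$, and exploit $m\ge 1\Rightarrow u^{m-1}\le 1$ on $[0,1]$ to bound everything by $N\cdot 2^{-N}\int_0^1(1+u)^{N-1}du = 1-2^{-N}\le 1$.

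For \eqref{pk:unif} the plan is a time-uniform martingale argument exploiting the symmetry $W_{S_0}\overset{d}{=}-W_{S_0}$. Order the null magnitudes as $|W_{(1)}|\ge|W_{(2)}|\ge\cdots$ and let $\cF_k:=\cF\vee\sigma(\sign(W_{(1)}),\ldots,\sign(W_{(k)}))$. Under $(\cF_k)_k$ the process $V^+(t)-V^-(t)$ evolves as a symmetric simple random walk. I would dominate $R_k:=V^+(|W_{(k)}|)/(V^-(|W_{(k)}|)+m)$ by a non-negative supermartingale $M_k$---for instance a Robbins-type mixture of exponential martingales of this random walk, or a normalized backward supermartingale adapted from the Barber-Cand\`es construction---and apply Ville's inequality $P(\sup_k M_k\ge c\mid\cF)\le E[M_0|\cF]/c$. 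Integrating the resulting tail estimate produces the finite constant. The target $3.9\approx 2\cdot 1.93$ is suggestive of a symmetrization that combines $\sup_t V^+(t)/(V^-(t)+m)$ with its equidistributional dual $\sup_t V^-(t)/(V^+(t)+m)$, each controlled through the Katsevich--Sabatti constant $1.93$ from \eqref{eq:kfunif}.

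The main obstacle is the uniform bound. A bare Ville estimate only gives $P(\sup_k R_k\ge c)=O(1/c)$, whose integral diverges, so one needs a sharper tail such as $O(1/c^2)$---from a Robbins mixture or an exponential supermartingale---to make $E[\sup_k R_k\mid\cF]=\int_0^\infty P(\sup_k R_k\ge c)\,dc$ finite. Quantifying the symmetrization to produce the explicit constant $3.9$, and controlling any cross-group dependence while preserving the supermartingale property, are the delicate points; the symmetry hypothesis $W_{S_0}\overset{d}{=}-W_{S_0}$, which is absent from the argument for \eqref{pk:ineqexp}, is exactly what enables this two-sided control.
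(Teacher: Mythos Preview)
Your argument for both parts rests on an assumption the theorem does not make. Condition (b) only asserts that \emph{within} each group $C_i$ the signs $\sign(W_j)$, $j\in C_i$, are mutually independent Rademacher; nothing is said about independence \emph{across} groups. You invoke ``the natural conditional independence of the $m$ groups'' to conclude that all of $(\sign(W_j))_{j\in S_0}$ are i.i.d.\ Rademacher and hence $V^+(t)\sim\mathrm{Bin}(N(t),1/2)$, but this is exactly what fails in general. The paper even stresses the opposite extreme: $W_{S_0}$ may consist of $m$ identical copies of a single Rademacher vector, in which case $V^+(t)$ is $m$ times a $\mathrm{Bin}(N(t)/m,1/2)$ variable, not binomial itself, and your integral computation (which would in fact yield the stronger bound with $+1$ rather than $+m$ in the denominator) breaks down. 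The presence of $m$ in the denominator of \eqref{pk:ineqexp} is precisely the price paid for possible cross-group dependence.

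The paper handles this differently. For \eqref{pk:ineqexp} it applies the Cauchy--Schwarz inequality
\[
\frac{\sum_i a_i}{\sum_i b_i}\le \frac{1}{\sum_i a_i}\sum_i \frac{a_i^2}{b_i}
\]
with $a_i=|C_i|+1$ and $b_i=\#\{j\in C_i:W_j\le -t\}+1$ to decouple the ratio into a weighted sum of per-group ratios; within each group the signs \emph{are} i.i.d.\ Rademacher, so the standard identity $E[\tfrac{V^+}{V^-+1}]=1-2^{-|C_i|}$ applies group by group. For \eqref{pk:unif} no supermartingale is available once the signs are not globally independent; instead the paper proves a concentration lemma for $V_j^+/(V_i^-+m)$ by using H\"older's inequality to decouple the $m$ groups inside the moment generating function, and then runs a slicing/covering argument over geometric blocks of indices, with the final constant $3.9$ obtained by numerical optimization of the slice endpoints. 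Your symmetric-random-walk and Ville/Robbins plan would be viable only under the global independence you are assuming, so as written it does not address the actual hypothesis.
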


 Although Theorem 1.1 does not provide FDR control for the pseudo knockoff filter, it provides some partial understanding of the pseudo knockoff filter. For the block diagonal and the general pseudo knockoff filters, we verify that the pseudo knockoff statistic $W_j$ satisfies the assumption in Theorem \ref{thm:main} for some appropriate $\sigma$ field $\cF$. For the orthogonal pseudo knockoff filter, the pseudo knockoff statistic $W_j$ does not satisfy the assumption in Theorem \ref{thm:main}. To gain some understanding of the orthogonal pseudo knockoff filter, we obtain a relatively tight upper bound for the distribution function of $\frac{ \# \{  j \in S_0 :  \ W_j \geq t \} } {    \# \{  j \in S_0 : \ W_j \leq - t \}  } $ for fixed $t$ when $\Sigma^{-1}$ is diagonally dominated or when $\Sigma^{-1}$ has some special structure.

The rest of the paper is organized as follows. In Section \ref{sec:pk}, we introduce the three classes of pseudo-knockoff filters and discuss some essential properties of the pseudo knockoff filters. In Section \ref{num}, we present a number of numerical experiments to demonstrate the effectiveness of the proposed methods. In Section \ref{sec:unif}, we prove \eqref{pk:ineqexp} and outline the proof of \eqref{pk:unif} in Theorem \ref{thm:main}. In Section \ref{sec:opk}, we provide some partial analysis of the orthogonal pseudo knockoff filter.

 \section{A pseudo knockoff filter}\label{sec:pk}
In this section, we describe how to construct the three classes of pseudo knockoff filters and the half Lasso statistic. We will also discuss some of the essential properties of these pseudo knockoff filters and the half Lasso statistic.

 \subsection{The Basic Constraint and a Symmetry Property}
Given a design matrix $X \in R^{n \times p}$ with $n > 2p$, the basic constraint of the pseudo knockoff matrix is given by
 \beq\label{pko:con}
    \tilde{X}^T \tilde{X} = X^T X, \quad X^T\tilde{X} = \tilde{X}^TX.
 \eeq
 We can prove that \eqref{pko:con} and \eqref{pko:ort} are equivalent. It is obviously that \eqref{pko:con} implies \eqref{pko:ort}. If \eqref{pko:ort} holds, we have $X^T X - \td{X}^T\td{X} = X^T \td{X} - \td{X}^TX$. Note that the right hand side is a symmetric matrix, while the left hand side is a skew-symmetric matrix. It follows that $X^T X - \td{X}^T\td{X}$ is symmetric and skew-symmetric. Thus we must have $X^T X - \td{X}^T\td{X} = 0$, which further implies $X^T \td{X} - \td{X}^TX=0$. These two equations establish \eqref{pko:con}. The orthogonality condition \eqref{pko:ort} is the foundation of the pseudo knockoff filter and leads to the conditional independence between the amplitude of the null statistic $|W_{S_0}|$ and its sign $\sign(W_{S_0})$.

 \paragraph{Least squares coefficients}
 Consider the least squares coefficients $(\hat{\b}^{ls}, \td{\b}^{ls})$ of regressing $y$ on the augmented design matrix $[X  \ \tilde{X}]$. It is easy to obtain that $(\hat{\beta}^{ls}+ \tilde{\beta}^{ls}, \hat{\beta}^{ls}- \tilde{\beta}^{ls})$ are the least squares coefficients of regressing $y = X\b+ \e$ on $\lt[ \frac{ X+\tilde{X}}{2}  \   \frac{ X-\tilde{X}}{2} \rt]$. Using the orthogonality condition (\ref{pko:ort}),
 we have a simple expression of the least squares coefficients,
 \beq\label{pko:ls}
  \begin{aligned}
   \left(
   \begin{array}{c}
   \hat{\beta}^{ls} +\tilde{\beta}^{ls}- \beta \\
   \hat{\beta}^{ls} - \tilde{\beta}^{ls} - \beta\\
   \end{array}
   \right)
   =&
   \left(
   \begin{array}{c}
   \left[ (\frac{X+\tilde{X}}{2})^T \frac{X+\tilde{X}}{2} \right]^{-1}   (\frac{X+\tilde{X}}{2})^T\epsilon \\
   \left[ (\frac{X- \tilde{X}}{2})^T \frac{X- \tilde{X}}{2} \right]^{-1}   (\frac{X-\tilde{X}}{2})^T\epsilon \\
   \end{array}
   \right)
   \triangleq
   \left(
   \begin{array}{c}
   \epsilon^{(1)} \\
   \epsilon^{(2)} \\
   \end{array}
   \right)
   .\\
     \end{aligned}
     \eeq
 The above relationship will be used repeatedly throughout the paper.
   Denote
   \beq\label{def:nota}
  \bal
   \eta \triangleq \hat{\beta}^{ls} + \tilde{\beta}^{ls} = \beta + \epsilon^{(1)}, & \  \xi \triangleq \hat{\beta}^{ls} - \tilde{\beta}^{ls} = \beta+ \epsilon^{(2) }\; .
  \eal
   \eeq
   From the orthogonality property \eqref{pko:ort}, we know that  $(\frac{X+\tilde{X}}{2})^T\epsilon$  and   $(\frac{X - \tilde{X}}{2})^T\epsilon$
   have independent multivariate normal distributions.  Using \eqref{pko:ls}, we know that $\epsilon^{(1)}$ and $\epsilon^{(2)}$, $\eta =\hat{\beta}^{ls} + \tilde{\beta}^{ls}$ and  $\xi = \hat{\beta}^{ls} - \tilde{\beta}^{ls} $ are also independent.

 \paragraph{The Pseudo Knockoff Statistics and Their Properties}
According to \eqref{eq:subprob}, we can solve $\hat{\b} + \td{\b}$ and $\hat{\b}  - \td{\b}$ in the half penalized problem \eqref{introhalf} separately. Thus the solution can be expressed as
 \beq\label{pko:sol}
 \hat{\beta} + \tilde{\beta} = f(\hat{\beta}^{ls} + \tilde{\beta}^{ls}) = f(\eta) , \quad \hat{\beta} - \tilde{\beta} = \hat{\beta}^{ls} - \tilde{\beta}^{ls} = \xi\; ,
 \eeq
 for some function $f : R^p \to R^p$. We construct the pseudo knockoff statistic as follows
   \beq\label{pko:sta}
    W_j \triangleq (\hat{\beta}_j + \tilde{\beta}_j) \cdot \sign( \hat{\beta}_j - \tilde{\beta}_j ) \textrm{ or } W_j \triangleq  \max{ (  | \hat{\beta}_j| , |\tilde{\beta}_j| ) }\cdot \sign( | \hat{\beta}_j| - |\tilde{\beta}_j | ).
 \eeq
 The pseudo knockoff statistic satisfies the following two properties.

 \textbf{Amplitude Property} \quad The amplitude of $W$ is determined by  $\hat{\beta}+\td{\beta} = f(\eta)$ and $|\hat{\beta} -\td{\beta} | =|\xi|$. In fact, using the definition of $W$ and \eqref{pko:sol}, we have
  \[
 |W| = |\hat{\beta} + \tilde{\beta} | = |f(\eta)| \quad \textrm{or} \quad |W| =  | \hat{\beta} | \vee | \tilde{\beta} |  =\frac{1}{2} (  | \hat{\beta}+\td{\beta}  + | \hat{\beta} -\td{\beta} | |  \vee  | \hat{\beta}+\td{\beta}  -  | \hat{\beta} -\td{\beta} |   |  ).
  \]

  \textbf{Sign Property} \quad The sign of $W$ is determined by $\sign(\hat{\beta} + \tilde{\beta})$ and $\sign(\hat{\beta} -\td{\beta})$. Since $\sign(| \hat{\beta}  | - | \tilde{\beta}| ) = \sign(| \hat{\beta}  |^2 - | \tilde{\beta}|^2 )$, for both definitions of $W$, we have
  \[
  \sign(W)  = \sign(\hat{\beta} + \td{\beta}) \cdot \sign(\hat{\beta} - \td{\beta} ) = \sign(f(\eta)) \cdot \sign(\xi).
  \]

  Now we show that the pseudo knockoff statistic satisfies a symmetry property.
 \begin{prop}\label{prop:sym}
 Conditional on $\eta$, we have $W_{S_0}  \overset{d}{=} - W_{S_0}$,  where $S_0 \triangleq \{j: \beta_j = 0 \}$ and the pseudo knockoff statistic $W_j$ is defined in (\ref{pko:sta}). Consequently, for any threshold $t > 0$, we have
 \beq\label{pko:sym}
   \#\{ j :\beta_j = 0 \textrm{ and } W_j \geq t \} \overset{d}{ = }         \#\{ j :\beta_j = 0 \textrm{ and } W_j \leq -t \}  .\;
 \eeq
 \end{prop}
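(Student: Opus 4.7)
The plan is to leverage the decomposition already set up in \eqref{pko:ls}--\eqref{pko:sol}, namely that $\eta = \hat{\beta}^{ls} + \tilde{\beta}^{ls}$ and $\xi = \hat{\beta}^{ls} - \tilde{\beta}^{ls}$ are independent, and that the half penalized solution splits as $\hat{\beta} + \tilde{\beta} = f(\eta)$ and $\hat{\beta} - \tilde{\beta} = \xi$. First I would observe that for $j \in S_0$ we have $\beta_j = 0$, so $\xi_j = \epsilon^{(2)}_j$ is a coordinate of a centered Gaussian vector, which implies that the subvector $\xi_{S_0}$ has a symmetric distribution, $\xi_{S_0} \overset{d}{=} -\xi_{S_0}$. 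Combined with the independence of $\eta$ and $\xi$ from \eqref{pko:ls}, this gives the key distributional identity: conditional on $\eta$, $\xi_{S_0}$ still has a centered (hence sign-symmetric) distribution.

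Next I would show that the map $\xi_{S_0} \mapsto -\xi_{S_0}$ induces $W_{S_0} \mapsto -W_{S_0}$ for either definition of $W$ in \eqref{pko:sta}. For the first definition, write $W_j = f(\eta)_j \cdot \sign(\xi_j)$ using the Sign Property; conditional on $\eta$, flipping $\xi_j$ to $-\xi_j$ flips $\sign(\xi_j)$ and therefore flips $W_j$ while leaving $f(\eta)_j$ untouched. For the second definition, observe that $\hat{\beta}_j = (f(\eta)_j + \xi_j)/2$ and $\tilde{\beta}_j = (f(\eta)_j - \xi_j)/2$, so the substitution $\xi_j \mapsto -\xi_j$ swaps $\hat{\beta}_j$ and $\tilde{\beta}_j$. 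This leaves $\max(|\hat{\beta}_j|, |\tilde{\beta}_j|)$ invariant but reverses the sign of $|\hat{\beta}_j| - |\tilde{\beta}_j|$, so again $W_j \mapsto -W_j$. Crucially, this coordinate-wise sign flip only affects the nulls, so $W_{S_0}$ flips while $W_j$ for $j \notin S_0$ is unaffected (though we only need the statement about $W_{S_0}$).

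Combining the two steps, conditionally on $\eta$ we have $W_{S_0}(\eta, \xi_{S_0}) \overset{d}{=} W_{S_0}(\eta, -\xi_{S_0}) = -W_{S_0}(\eta, \xi_{S_0})$, establishing $W_{S_0} \overset{d}{=} -W_{S_0}$ conditional on $\eta$. Integrating out $\eta$ yields the unconditional version. The second claim \eqref{pko:sym} then follows immediately: the functionals $\xi_{S_0} \mapsto \#\{j \in S_0 : W_j \geq t\}$ and $\xi_{S_0} \mapsto \#\{j \in S_0 : W_j \leq -t\}$ are exchanged under the map $W_{S_0} \mapsto -W_{S_0}$ (using that the event $\{W_j = 0\}$ has probability zero for nulls, since $\xi_j$ is continuous), so they must have the same distribution.

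The only mildly subtle point is handling the two different forms of $W_j$ uniformly; I would package them by noting that in both cases $W_j$ for $j \in S_0$ can be written as $g(f(\eta)_j, \xi_j)$ for a function $g$ that is odd in its second argument, and the rest of the argument proceeds identically. No deep obstacle is expected because the symmetry is built directly into the orthogonality condition \eqref{pko:ort} through the independence in \eqref{pko:ls}.
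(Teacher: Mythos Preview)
Your proposal is correct and follows essentially the same route as the paper: exploit the independence of $\eta$ and $\xi$, use that $\xi_{S_0}=\epsilon^{(2)}_{S_0}$ is sign-symmetric, and check that the map $\xi_{S_0}\mapsto -\xi_{S_0}$ sends $W_{S_0}$ to $-W_{S_0}$ for both definitions in \eqref{pko:sta} (the paper phrases this via the global flip $\epsilon^{(2)}\mapsto -\epsilon^{(2)}$ together with the Amplitude and Sign Properties, but the content is identical). One small remark: your parenthetical about $\{W_j=0\}$ having probability zero is unnecessary (and not quite right for the first statistic, since $f(\eta)_j$ can vanish under the half Lasso), but \eqref{pko:sym} follows directly from $W_{S_0}\overset{d}{=}-W_{S_0}$ for any $t>0$ without that claim.
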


   \begin{proof}
 According to \eqref{def:nota} and \eqref{pko:sol}, the solution of the half penalized problem can be expressed as
 \[\hat{\beta} + \tilde{\beta} =f(\eta) =  f(\beta+ \epsilon^{(1)}), \quad \hat{\beta} - \tilde{\beta} = \xi = \beta + \epsilon^{(2)}.\]

 Next, we replace $(\epsilon^{(1)}, \epsilon^{(2)})$ by $( \epsilon^{(1)}, - \epsilon^{(2)})$ to generate a new pair of solutions $(\hat{\beta}^{new}, \tilde{\beta}^{new})$. From \eqref{def:nota}, changing $\epsilon^{(2)}$ to $-\epsilon^{(2)}$ does not change $\eta$. Thus, we obtain
\[ \hat{\beta}^{new} + \tilde{\beta}^{new} = f(\eta)  = \hat{\beta} + \tilde{\beta},  \quad \hat{\beta}^{new} -\tilde{\beta}^{new} =  \beta  - \epsilon^{(2)}.\]
The amplitude and sign properties of $W$ imply $| W^{new}_{S_0} | = |W_{S_0}| $ and
 \[
   \sign(W^{new}_{S_0}) = \sign(   ( f(\eta))_{S_0} \cdot  (- \e^{(2)})_{S_0} ) = - \sign( (f(\eta) )_{S_0} \cdot (\e^{(2)})_{S_0} ) = -\sign(W_{S_0}).
 \]
 Hence $W^{new}_{S_0} = -W_{S_0} .$

  Recall that $W_{S_0}$ is generated by $\epsilon^{(1)},\epsilon^{(2)}$ and that $\epsilon^{(1)},\;\epsilon^{(2)}$ have independent multivariate normal distributions with zero mean. Conditional on $\eta$ (or equivalently $\epsilon^{(1)}$), we have
 \[  (\epsilon^{(1)},\epsilon^{(2)}) \overset{d}{=} (  \epsilon^{(1)},  - \epsilon^{(2)}) \Longrightarrow  W_{S_0} \overset{d}{=} W_{S_0}^{new} =  - W_{S_0} .\]
\eqref{pko:sym} is a directly result of $W_{S_0}  \overset{d}{=} -W_{S_0}$.
 \end{proof}

\paragraph{A half Lasso statistic}
 We assume that $n > 2p$ and choose $P(x) = \lambda ||x||_1$ in \eqref{introhalf} to obtain a \textit{half Lasso} optimization problem:
 \vspace{-0.05in}
 \beq\label{eq:half}
\min_{\hat{\beta},\tilde{\beta}} \frac{1}{2} || y- X \hat{\beta} - \tilde{X}\tilde{\beta}||_2^2 + \lambda  ||  \hat{\beta}+ \tilde{\beta} ||_1 .
     \vspace{-0.05in}
 \eeq
 We then define the pseudo knockoff statistic according to \eqref{pko:sta}. It satisfies the symmetry property in proposition \ref{prop:sym}. We have conducted many simulations with different design matrices and signal sparsity and found that the half Lasso statistic offers robust performance when the tuning parameter $\lam$ is of the same order as the noise level. Thus we can choose the tuning parameter $\lambda$  by $\lambda = \mu  || U^T y||_2 / \sqrt{n-2p}$,  where $U \in R^{n \times (n-2p)}$ is an orthonormal matrix such that $[X \tilde{X}]^TU =0$. In fact, $U^Ty$ is exactly the residue of regressing $y$ onto $[X \ \tilde{X}]$. From our numerical study, we also observe that the power of the half Lasso statistic is not very sensitive to $\mu$ for a small range of $\mu$ centered at $\mu = 1$  and the numerical results seem to suggest that $\mu = 0.75$ is among the optimal choice. Thus we choose $\lambda = 0.75 || U^T y||_2 / \sqrt{n-2p}$ as the default tuning parameter. One can verify the symmetry property of the pseudo knockoff statistic using a similar argument.

  \subsection{Construction of the Pseudo Knockoff Matrix}\label{pko:conmat}
 In the previous subsection, we described the basic constraint \eqref{pko:con} for the pseudo knockoff matrix. In this subsection, we impose an additional constraint on $\tilde{X}$ so that we can obtain another important property for the pseudo knockoff statistic. In particular, we are interested in three classes of pseudo knockoff matrices, namely the \textit{orthogonal}, the \textit{block diagonal} and the general pseudo knockoff matrices.

 From \eqref{pko:ls} and \eqref{def:nota}, we know that the covariance matrix of $\epsilon^{(2)}$, or equivalently $\xi$, is given by
 \beq\label{pko:con0}
 \BB \triangleq  4[(X - \tilde{X})^T(X - \tilde{X})]^{-1} \; .
 \eeq
  We can design $\BB$ in such a way that we obtain some special correlation structure on $\xi$. To increase the power of the pseudo knockoff filter, we would like to construct $\td{X}$ such that the difference between $\td{X}_j$ and $X_j$ is large. Since  $||X_j - \td{X}_j||_2^2 = (( \BB/ 4 )^{-1})_{jj}$,
   we aim to design $\BB$ as small as possible. Due to the existing constraint \eqref{pko:con} or \eqref{pko:ort}, the covariance matrix $\BB$ cannot be chosen arbitrarily. We give a necessary and sufficient condition on $\BB$ to find $\td{X}$ that satisfies \eqref{pko:ort} and \eqref{pko:con0}.

 \paragraph{Necessary Condition on $\BB$}
 Assume that there exists some $\td{X}$ that satisfies \eqref{pko:ort} and \eqref{pko:con0} and $X-\tilde{X}$ has full rank. Performing SVD on $(X - \td{X}) / 2 $, we have  $(X -  \tilde{X})/2 = \PP \MM^{-1} $  for some orthonormal matrix $\PP \in R^{n\times p}$ and some invertible matrix $\MM \in R^{p \times p}$. As a result, we get $\BB = [ ( \PP \MM^{-1})^T (\PP \MM^{-1})]^{-1}  = \MM \MM^T$ and $\tilde{X} = X - 2 \PP \MM^{-1}$.
 Substituting the last equation into the orthogonal condition $(X+\tilde{X})^T (X -\tilde{X}) = 0$ (see \eqref{pko:ort}), we obtain

 \[
 \bal
 4(  X -    \PP \MM^{-1} )^T \PP \MM^{-1} = 0   &\quad  \iff \quad  \MM^{-T} \MM^{-1} =  \MM^{-T} \PP^T X   \\
 \iff \quad  \MM^{-1}  = P^T X  \ & \quad  \Longrightarrow  \quad \BB = (X^T P P^T X)^{-1} .\\
 \eal
 \]
 Since $P \in R^{n\times p}$ is orthonormal, we have
 \beq\label{pko:con_B}
 X^T P P^T X \preceq X^T \II  X = X^T X = \Sigma \quad  \Longrightarrow   \quad \BB  =  (X^T P P^T X)^{-1} \succeq \Sigma^{-1}\;.
 \eeq
 \paragraph{Sufficiency} If $B$ satisfies \eqref{pko:con_B}, we have $B - \Sigma^{-1} \succeq 0$ and can construct $\td{X}$ as follows
 \beq\label{pko:con4}
 \td{X} = X ( \II - 2 \Sigma^{-1} \BB^{-1})  +2  \UU \CC \BB^{-1} \;,
 \eeq
 where $\CC \in R^{p \times p}$ satisfies $\CC^T \CC = \BB - \Sigma^{-1} $  and  $\UU \in R^{n\times p}$ is an orthonormal matrix with $\UU^T X =0$. We will show that $\td{X}$ constructed from \eqref{pko:con4} satisfies \eqref{pko:ort} and \eqref{pko:con0} in the end of Appendix A.

 \subsubsection{An Orthogonal Construction}
The simplest construction is to choose $\BB = 2 \Sigma^{-1}$, which is equivalent to the following
 \beq\label{pko:con2}
 X^T X = \td{X}^T \td{X}, \quad  X^T \td{X} =\td{X}^T X =  0\;.
 \eeq
We call this special pseudo knockoff the \textit{orthogonal pseudo knockoff} since $\td{X}$ and $X$ are orthogonal. To construct an orthogonal pseudo knockoff matrix $\tilde{X}$, we first find the SVD of $X\in R^{n\times p}: \ X = U D V^T, \; U \in Orth^{n \times p}, \; D =\diag\{\sigma_1,...,\sigma_p\}$ and $V \in Orth^{p \times p} .$
 We then choose any orthonormal matrix $W\in R^{n\times p}$, whose column space is orthogonal to that of $X$ (i.e. $X^TW=0$), and construct the pseudo knockoff matrix $\tilde{X}$ as $ \tilde{X} = WDV^T$. It is easy to verify that $\tilde{X}$ satisfies \eqref{pko:con2}.

 \subsubsection{A Block Diagonal Construction}
 \paragraph{A Block Diagonal Construction}
 Consider a block diagonal matrix $\BB = 2 \diag( S^{-1}_{11},S^{-1}_{22},...,S^{-1}_{kk})$, where $S_{ii}$'s are invertible matrices. The constraint on $\BB$ is equivalent to
 \beq\label{pko:bd1}
 2 \BB^{-1} =  \diag( S_{11},S_{22},...,S_{kk})  \preceq 2 \Sigma .
 \eeq
 Hence  $(X - \tilde{X})^T (X - \tilde{X}) = 4 \BB^{-1} =  2 \diag( S_{11},S_{22},...,S_{kk})$. Using this relationship together with the basic constraint \eqref{pko:con}, i.e. $X^TX = \td{X}^T \td{X},  X^T \td{X} = \td{X}^T X$, we obtain
 \beq\label{pko:bd0}
 X^T X = \td{X}^T \td{X} , \quad  X^TX - X^T\td{X}   = \diag( S_{11},S_{22},...,S_{kk}).
 \eeq
 Assume that  $X$ can be clustered into $(X_{G_{1}}, X_{G_2},...,X_{G_k})$. Inspired by the group knockoff construction in \cite{Barber}, we first choose $S_{ii} \triangleq  \gamma  \Sigma_{G_i, G_i} = \gamma X_{G_i}^T X_{G_i} , \ i =1,2,...,k .$
 The constraint \eqref{pko:bd1} implies $\gamma \cdot \diag(\Sigma_{G_1,G_1},\Sigma_{G_2,G_2},...,\Sigma_{G_k,G_k}) \preceq 2\Sigma$. In order to maximize the difference between $X$ and $\tilde{X}$, $\gamma $ should be chosen as large as possible:
 $  \gamma \leq  \min \{1,2\cdot \lambda_{\min} (D\Sigma D) \}$, where $D=\diag(\Sigma^{-1/2}_{G_1,G_1},\Sigma^{-1/2}_{G_2,G_2},...,\Sigma^{-1/2}_{G_k,G_k}   ) $. To ensure that the matrix $(X + \td{X})^T(X + \td{X})$ is nonsingular, we choose $\gamma = \frac{1}{1.2} \min \{1,2\cdot \lambda_{\min} (D\Sigma D) \}  $ in our numerical experiments. Once we construct $\BB$, we can generate the pseudo knockoff matrix via the procedure described earlier. This construction is useful if the features $X_j$ are clustered.

 \subsubsection{A general construction} \label{sec:gen}
 In general, we first divide the features $X_j$ into $m$ groups $C_1, C_2,.., C_m$ such that the correlation within each group is relatively weak. We remark that this criterion of partition is different from the grouping strategy in the block diagonal construction. The motivation of this partition is that $(\Sigma^{-1})_{C_jC_j}$ may be close to a diagonal matrix, which can be useful for the later construction of $\BB$.

We give two examples to illustrate why this partition may give rise to $(\Sigma^{-1})_{C_jC_j}$ that is close to a diagonal matrix.
For example, if each $X_j$ is only strongly correlated with its neighbors $X_{j+i}$ for $|i|$ small, we can choose $C_k = \{ im + k:  i=0,1,..  \}$ for $k=1,2,..,m$. If $\Sigma_{ij} = X_i^TX_j = \rho^{|i-j|}$ for some $\rho >0$, $\Sigma^{-1}$ is tridiagonal and thus $(\Sigma^{-1})_{C_j, C_j}$ is a diagonal matrix. Another example is that if $X$ can be clustered into several groups such that the within-group correlation is stronger than the between-group correlation and the maximal group size is $m$, then we can pick $C_i$ as the $i$th element in each group for $1 \leq i \leq m$. If the between group correlation is $0$,  $(\Sigma^{-1})_{C_j, C_j}$ is also a diagonal matrix.

We construct a diagonal matrix $\bS_j$ that majorizes $(\Sigma^{-1})_{C_j, C_j}$ using a semidefinite program (SDP)
 \[
 \textrm{ minimize  } trace(\bS_j)  \quad \textrm{ subject to }  \quad \gamma (\Sigma^{-1})_{C_j, C_j}  \preceq \bS_j,  \quad 2 \leq (\bS_j)_{ii} \; .
 \]
The above SDP is similar to the SDP in the knockoff construction \cite{Candes} and can be solved very efficiently. $\gamma > 1$ is some parameter to be determined. If $(\Sigma^{-1})_{C_j, C_j}$ is close to a diagonal matrix, we can construct a $S_j$ such that their entries are not too large.
Next, we construct $\BB$ as follows
\beq\label{eq:gener}
\BB_{C_i , C_i} = \bS_i,  \quad  \BB_{C_i, C_j} = \gamma ( \Sigma^{-1})_{C_i, C_j}  \quad   1 \leq i \ne j  \leq m .
\eeq
 The difference between $\BB$ and $\Sigma^{-1}$ is on the diagonal. The above $\BB$ satisfies constraint \eqref{pko:con_B}
 \[
 \bal
 &\BB - \gamma \Sigma^{-1} = \diag(  B_{C_1, C_1} - \gamma  ( \Sigma^{-1} )_{C_1, C_1}  \ ,..., \ B_{C_m, C_m} - \gamma  ( \Sigma^{-1} )_{C_m, C_m}  )  \\
  =& \diag(  \bS_1 - \gamma  ( \Sigma^{-1} )_{C_1, C_1} \  , ..., \  \bS_m - \gamma  ( \Sigma^{-1} )_{C_m, C_m}  )   \succeq  0 \quad  \Rightarrow \quad  \BB \succeq \gamma \Sigma^{-1} \succeq \Sigma^{-1} .
 \eal
 \]
 We choose $\gamma = 1.2$ to ensure that $(X + \wt{X})^T (X+\wt{X})$ is nonsingular.

Among three constructions of the pseudo knockoff matrix, we choose the general construction as the default construction. After we construct the pseudo knockoff matrix $\td{X}$, we use $y, [X \ \td{X}]$ to calculate the half Lasso statistic and finally apply the knockoff+ filter \eqref{def-T0} with the target FDR level $q$ to selection features.

\paragraph{Relation to the knockoff filter}
If $m = 1$, $\td{X}$ constructed via the block diagonal or the general construction is exactly a knockoff matrix of $X$ \cite{Candes}. The constraint \eqref{revko1} in the original knockoff filter implies that $(X -\td{X})^T(X-\td{X})$ is a diagonal matrix, which in turn forces $[(X -\td{X})^T(X-\td{X})]^{-1} = \BB/4$ to be a diagonal matrix. In the construction of the pseudo knockoff matrix \eqref{pko:bd1} or \eqref{eq:gener},  we only require that $\BB$ be a block diagonal matrix or some submatrices of $\BB$ be diagonal. In this case, we can consider the pseudo knockoff filter as a generalization of the knockoff filter.

By comparing our block diagonal pseudo knockoff construction with the group knockoff filter in \cite{Barber}, we can see that the pseudo knockoff matrix, $\td{X}$, in \eqref{pko:bd0} is actually a group knockoff matrix of $X$. The group knockoff filter is originally designed for group selection with group FDR control while our block diagonal pseudo knockoff filter is designed for feature selection.

 \section{Numerical results for the pseudo knockoff filter}\label{num}
 In this section, we perform a number of numerical experiments to test the robustness of the pseudo knockoff filter and study the performance of various methods.

\subparagraph{Notations.}
$\beta_i \overset{i.i.d}{\sim} \{ \pm A\}$ means that $\beta_i$ takes value $A$ or $-A$ independently with equal probability $1/2$. We denote the orthogonal pseudo knockoff, the pseudo knockoff with the block diagonal construction, and the pseudo knockoff with general construction as \textit{orthogonal (OPK), block diagonal (BDPK), general (GPK)} pseudo knockoff.

 \subparagraph{Data} 
 Given some covariance matrix $\Sigma$, we first draw the rows of the design matrix $X \in R^{n \times p}$ from a multivariate normal distribution $N(0,\Sigma)$, and then normalize the columns of $X$. The pseudo knockoff matrix is generated according to Section \ref{pko:conmat}. To generate the signal strength $\beta\in R^{p}$, we choose $k$ coefficients $\beta_{i_1}, \beta_{i_2},..., \beta_{i_k}$ randomly and set $\beta_{i_j} \overset{i.i.d}{\sim} \{ \pm A\}$. Finally, the response variable $y\in R^n$ is generated from $y = X \beta + \epsilon, \e \sim N(0, I_n) $. Unless we specify otherwise, we will use the following default setup, i.e. the sample size is $p = 500, \; n =1500$,  the sparsity is $k = 30$, the signal amplitude is $A =3.5$ and the covariance matrix is $\Sigma = I_p$.

 \subparagraph{Methods}
 The methods that we focus on include the OPK, BDPK and GPK filters
with the half Lasso statistic ($\lambda = 0.75$). We use the knockoff+ filter \eqref{def-T0} with nominal FDR level $q = 20\%$. We assume that every 5 features form a group and then construct the BDPK matrix. We choose $C_k=\{im + k: i =0,1,.. \}$ with $m=2,3, 5$ to construct the GPK matrix.
After obtaining the fitted value $\hat{\b},\td{\b}$ in the half Lasso problem, we have two choices to construct the statistic, $W$, in \eqref{pko:sta}. Denote $W^{(1)}_j = (\hat{\beta}_j +  \tilde{\beta}_j )  \cdot \sign( \hat{\beta}_j   -  \tilde{\beta}_j  ) $ and  $W^{(2)}_j =  |\hat{\beta}_j | \vee|\tilde{\beta}_j  | \cdot \sign( | \hat{\beta} _j| - |\tilde{\beta}_j |) $. For the OPK,  we use $W^{(2)}$, which seems to offer more power with OPK;  for other pseudo knockoff filters, we consider both constructions of $W$ in \eqref{pko:sta}. There are 9 methods in total.

 \subsection{Numerical evidence of FDR control for the pseudo knockoff filter}\label{fdr-ctr}
 In this subsection, we perform extensive numerical experiments to test whether the pseudo knockoff filter has FDR control. For this purpose, we apply it to select features in the linear model $y = X\beta +\epsilon$ with different design matrices under various extreme conditions.

 The default simulated data is discussed at the beginning of Section \ref{num} and we vary one of the default settings in each experiment as follows (one setting is varied while keeping the others unchanged).

 \indent \textit{(a) Sparsity}: $k$ varies from $10,20,30,...,90, 100$. \\
 \indent \textit{(b) Signal amplitude}: $A$ varies from $2.8,2.9,...,4.2$. \\
 \indent \textit{(c) Correlation Structure}: We use the covariance matrix $\Sigma \in R^{500 \times 500}, \Sigma_{ij} = \rho^{|i-j|}$ and vary the correlation level $\rho =0,0.1,...,0.9$.  \\
 \indent \textit{(d) The sample size}: We vary the sample size $n = 150l,\; p = 50 l$ and sparsity $k=10l$ with $ l \in\{2,3,...,12\}$.   \\
 \indent \textit{Group Structure}: 
 We assume that the features $X_j$ can be clustered into $100$ groups with $5$ features in each group. To generate a different group structure, we choose the covariance matrix  $\Sigma_{ii} =1$, $\Sigma_{ij} = \rho$ for $i \ne j$ in the same group and $\Sigma_{ij}= \gamma \cdot\rho$ for $i\ne j$ in different groups and generate the design matrix $X$ as in the previous discussion.  \\
 \indent \textit{(e) The within-group correlation}: $\gamma =0$ is fixed and $\rho$ varies from $0,0.1,0.2,...,0.9$. \\
 \indent \textit{(f) The between-group correlation}:  $\rho = 0.5$ is fixed and $\gamma$ varies from $0,0.1,0.2,...,0.9$.

 We pay particular attention to  the \textit{FDR} (the mean false discovery proportion), the \textit{power} (the expected proportion of true discoveries) and the \textit{expectation}, which is defined as the expectation of
 $\frac{ \# \{ j:  \ W_j \geq T  \ \&  \ \beta_j = 0   \}   } {  \# \{  j:  \ W_j \leq - T \ \& \ \beta_j = 0\}   +1     }.$
Each experiment is repeated 200 times to calculate these quantities.
The design matrix $X$ and the pseudo knockoff matrices $\td{X}$ are fixed over these trials. We plot the results of OPK and BDPK (m=5),
GPK (m=2) with $W^{(1)} = (\hat{\beta} +  \tilde{\beta} )  \cdot \sign( \hat{\beta}   -  \tilde{\beta}  ) $ in Figure \ref{fdr1}, \ref{fdr2}.

 The dotted line in Figure \ref{fdr1} and Figure \ref{fdr2} represents the prescribed FDR $q$ or constant $1$ as a reference. In all figures, we observe that the FDR is controlled by $q=20\%$. From the results of the expectation, we observe that all of them are close to or less than $1$.
Other six methods described before Section \ref{fdr-ctr} control FDR in the above examples. In Section \ref{sec:FDR}, we will provide partial analysis to gain some understanding of the pseudo knockoff filter.

  \begin{figure}[H]
   \centering
      \includegraphics[width =\textwidth ]{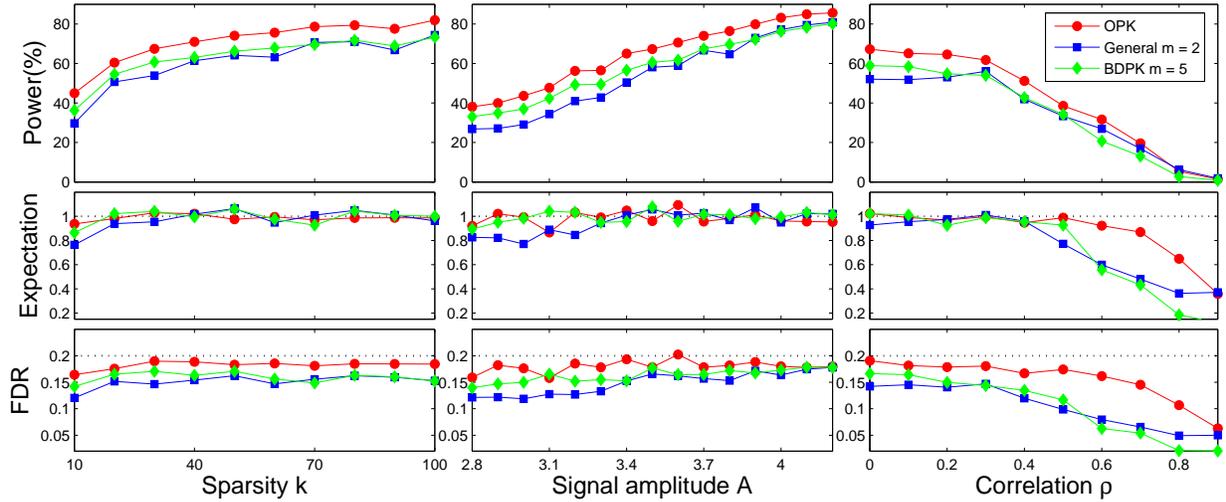}
   \caption{Testing the orthogonal, the block diagonal and the general pseudo knockoff+ at a nominal FDR $q= 20\%$ by varying the
 sparsity, the signal amplitude, or the feature correlation. }
   \label{fdr1}
   \end{figure}
   \begin{figure}[h]
   \centering
        \includegraphics[width =\textwidth ]{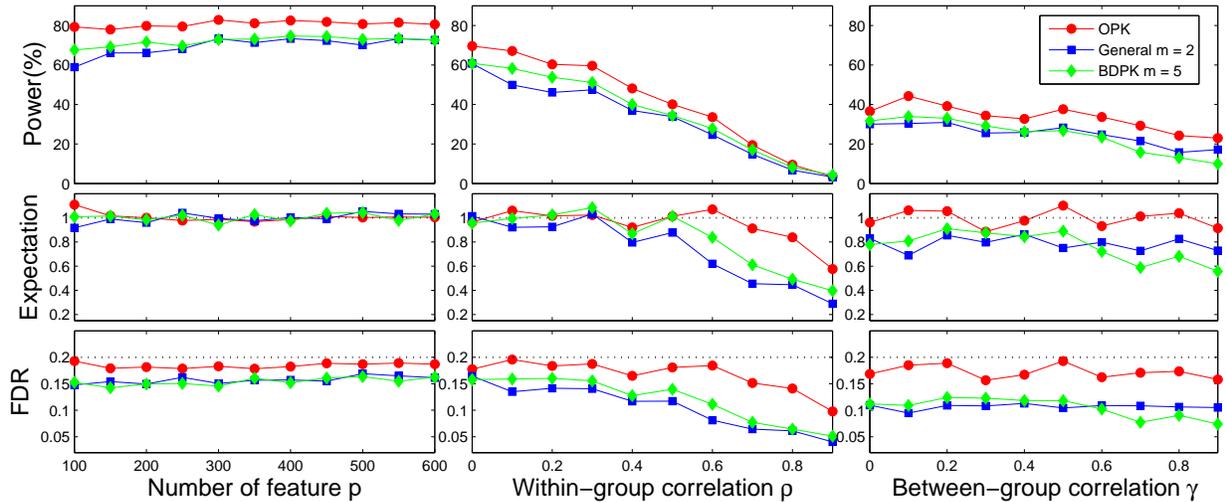}
\caption{ Testing several pseudo knockoff+ filters  at a nominal FDR $q= 20\%$ by varying the number of features $p$, the within-group correlation or the between-group correlation. }
   \label{fdr2}
   \end{figure}

 \subsection{The pseudo knockoff filter in some correlated scenarios}\label{sec:comp}
 Due to the constraints on the knockoff matrix in the original knockoff filter, strongly correlated features force the $s_i$ to be small \cite{CHH16}, which may lead to loss of some power. A main advantage of the pseudo knockoff filter is that it relaxes the constraint of $\td{X}$ in \eqref{pko:con}. In some correlated scenarios with some special structure, we can construct the pseudo knockoff matrix that is adapted to such structure and improve the power. To illustrate the effectiveness of the pseudo knockoff filter, we compare the knockoff filter using various statistics with various pseudo knockoff constructions using the half Lasso statistic.

\paragraph{Statistics}
We use the half Lasso statistic with $\lambda = 0.75  || U^T y||_2 / \sqrt{n-2p}$ ($n > 2p$) for the pseudo knockoff filter. We also consider the corresponding statistics in the knockoff filter for comparison. Specifically, we consider the knockoff filter with the half Lasso or Lasso using the same tuning parameter ($\lambda = 0.75  || U^T y||_2 / \sqrt{n-2p}, \; n > 2p$) and the sign max statistic $W^{(2)}$.
In addition, we have tested the knockoff filter with other statistics, including the Lasso path and the OMP statistics.
The knockoff matrix is generated by the SDP construction introduced in \cite{Candes}.  In the following examples,  we use a slightly larger signal amplitude $A =5$. For these methods, we use the knockoff+ filter \eqref{def-T0} with nominal FDR level $q = 20\%$. Throughout all the examples in this Section, we repeat the experiment 200 times to obtain the FDR and the averaged power.

\paragraph{Group Structure}
We consider a design matrix $X \in R^{1500 \times 500}$ with a group structure and  two sparsity cases: $k=30$ and $k=100$. In particular, we consider experiment (e) in Section \ref{fdr-ctr}. The within-group correlation factor $\rho$ varies from $0.5,0.55,,...,0.95$ and the between-group correlation factor is $\gamma = 0$. In all other settings, we use the default values. By taking advantage of the {\it a priori} knowledge of the correlation structure of $X$, we construct 
the BDPK and GPK
with $m=5$. We also implement the OPK with $W^{(2)}$ statistic for comparison.
\begin{figure}[H]
 \centering
 \includegraphics[width =\textwidth ]{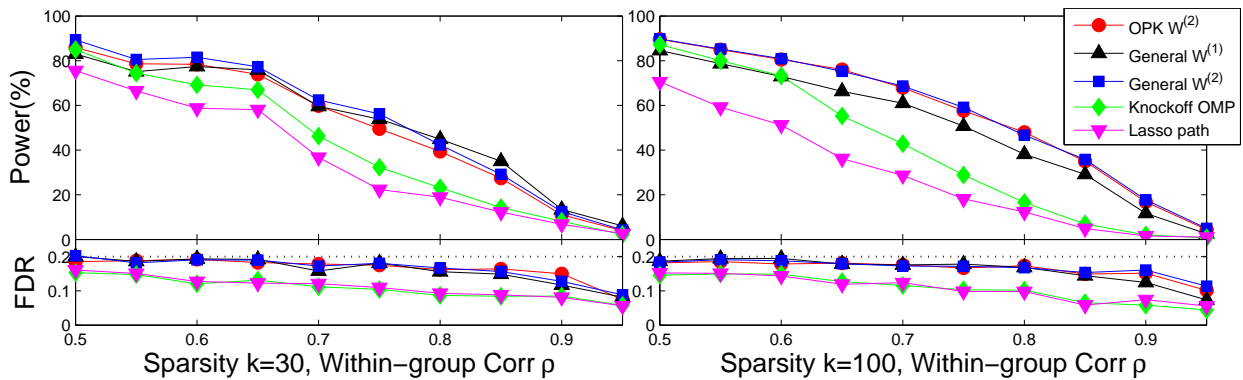}
 \caption{Comparing the orthogonal, the general pseudo knockoff+ filter and the knockoff+ filter with several statistics at nominal FDR $q= 20\%$ by  varying the within-group correlation. Here, the general  $W^{(i)}$  means the method using the general pseudo knockoff construction and $W^{(i)}$ statistic.}
  \label{blockdiag}
 \end{figure}

In both figures, the pseudo knockoff filters control FDR and outperform the knockoff filter with the OMP or the Lasso path statistic. The BDPK with $W^{(1)}$ statistic (not plotted) also outperforms the knockoff filter with two statistics but offers less power than that of the OPK or the GPK. 

 \paragraph{Decaying Structure}
 We consider a design matrix $X \in R^{1500 \times 500}$ with some decaying structure and  two sparsity cases: $k=30$ and $k=100$. Specifically, the design matrix $X$ is generated from $N(0,\Sigma)$ with $\Sigma_{ij} = \rho^{|i-j|}$, where $\rho$ varies from $0.5,0.55,...,0.95$. Other settings use the default values. We know \textit{a priori} that the off-diagonal elements of $\Sigma^{-1}$ decay rapidly. Thus, we apply the GPK with parameter $m=5$. We also implement the OPK with $W^{(2)}$ statistic for comparison.
\begin{figure}[h]
 \centering
   \includegraphics[width =\textwidth ]{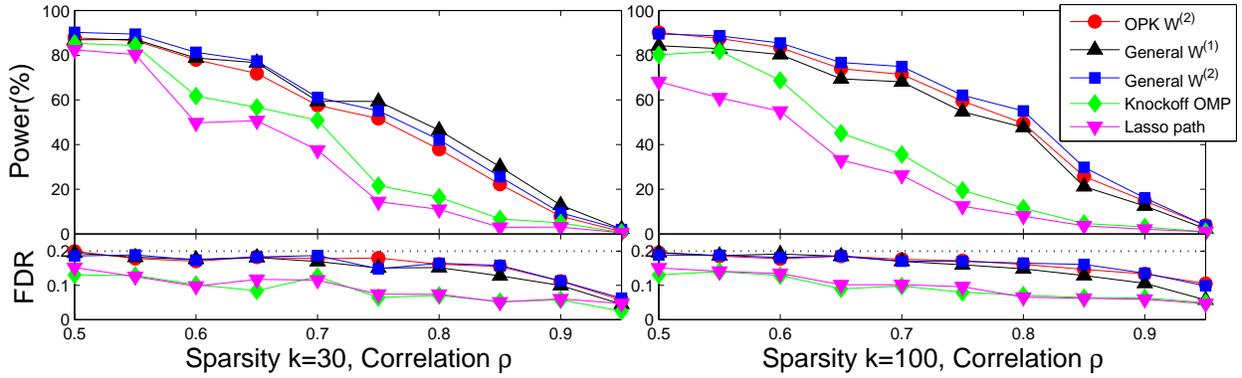}
    \caption{Comparing the orthogonal, the general pseudo knockoff+ filter and the knockoff+ filter at nominal FDR $q= 20\%$ by varying the pairwise correlation.}
 \label{blockdiag}
 \end{figure}

In Figure \ref{blockdiag}, we again observe that in both figures the pseudo knockoff filters control FDR  and outperform the knockoff filter with the OMP or the Lasso path statistic. We also implement the GPK with parameter $m=2$ and two statistics $W^{(1)}$ and $W^{(2)}$. Its performance is still better than that of the knockoff filter with the OMP or the Lasso path statistic.

In these two examples with group or decaying structure, the knockoff filter with the Lasso sign max statistic $W^{(2)}$ or with the half Lasso statistic ($W^{(1)}$ version) offers more power than that of the OMP or the Lasso path statistic. Their powers are comparable to that of the OPK or the GPK. The tuning parameter $\lambda = 0.75  || U^T y||_2 / \sqrt{n-2p}$, which was designed for the pseudo knockoff filter with half Lasso statistic, works equally well for the knockoff filter with the Lasso or the half Lasso statistic in these two examples.

\paragraph{Exploring the special structure in the precision matrix} 
Next, we investigate how we can design an effective pseudo knockoff filter by taking advantage of the special structure in the precision matrix $\Sigma^{-1}$. We consider three examples :  (a) $(\Sigma^{-1})$ is a block diagonal matrix with equal  block size $5$ and $(\Sigma^{-1})_{ii} = 1$, $ (\Sigma^{-1})_{ij} = \rho$ for $i \ne j$ in the same block and $0$ otherwise; (b)  $(\Sigma^{-1})_{ij} =  \rho^{  |i - j|}$;  (c) $(\Sigma^{-1})_{ii} =1$ and $(\Sigma^{-1})_{ij} = \rho$ for $ i \ne j$. We then generate $X$ from the multivariate normal distribution $N(0, \Sigma)$ as in the previous numerical examples.
We vary $\rho$ from $0.5,0.55,...,0.95$ in example (a), (b) and from $0,0.1,0.2,.., 0.9$ in example (c). We consider the sparsity level $k=30$ and focus on the pseudo knockoff filter with the half Lasso statistic and the knockoff filter with the Lasso and the half Lasso statistics. The special structure of the precision matrix suggests that choosing $m=5$ for the GPK would be a reasonable choice for these examples. We also implement the OPK for comparison.

\begin{figure}[H]
 \centering
   \includegraphics[width =\textwidth ]{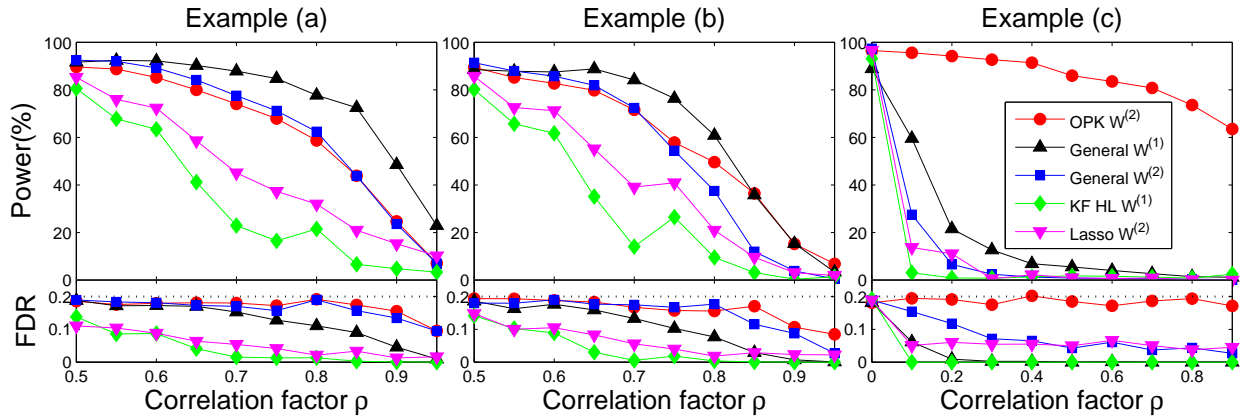}
    \caption{Comparing the pseudo knockoff+ filter with the knockoff+ filter with several statistics at nominal FDR $q= 20\%$ by varying $\rho$ in various precision matrices. The left, middle and right subfigures correspond to example (a), (b) and (c), respectively. \textit{KF HL} is short for the knockoff filter with the half Lasso and \textit{Lasso} is short for the knockoff filter with the Lasso statistic.}
 \label{precision}
 \end{figure}
We observe that when we construct the knockoff matrix $\td{X}$ using the original knockoff filter, the difference between some feature $X_i$ and its knockoff $\td{X}_i$ generated by the SDP construction is very small for some cases in example (b), (c) when $\rho$ is large. We compute the mean $s_i$ (see \eqref{revko1}) in example (c) for 10 different values of $\rho$ that we use in this example. Their mean values for $\rho=0,0.1,...,.0.9$ are  $0.426, 0.031,  0.013,  0.007,    0.005,    0.003 ,   0.0019 ,   0.0013 ,   0.0007,    0.0003$, respectively. In our computation, we have used the glmnet package in Matlab \cite{glmnet} to solve the Lasso optimization problem, $(\hat{\b}, \td{\b}) = \arg \min_{( \hat{b}, \td{b}) } \f{1}{2}  || y-  X \hat{b} -\td{X} \td{b}  ||_2^2 + \lambda  || (\hat{b}, \td{b}) ||_1$. The original results that we have obtained are a bit surprising in the sense that the Lasso statistic constructed this way fails to control FDR in this extreme example. To gain some understanding what goes wrong, we found that the numerical solution of this Lasso optimization problem is significantly different from the numerical solution of $(\hat{\b}, \td{\b}) = \arg \min_{ ( \hat{b}, \td{b}) } \f{1}{2}  || y-  \td{X}\td{b}  - X\hat{b}  ||_2^2 + \lambda  || (\hat{b}, \td{b}) ||_1$, which is the same Lasso optimization problem except that we have swapped the order of the input variables $(X, \td{X})$. This numerical error may be attributed to the extremely small difference between $X_i$ and $\td{X}_i$ for some $i$ and the degeneracy of the augmented design matrix $[X \  \td{X} ]$. 
This numerical error may lead to the violation of the flip-coin property of the knockoff statistic $W$ constructed from the numerical solution $(\hat{\b}, \td{\b})$, which may explain why we could lose FDR control in this extreme case. To overcome this difficulty, we turn off the knockoff $\td{X}_i$ for $X_i$ if $s_i$ is small when we construct the knockoff Lasso sign-max statistic. More specifically, we define an index set, $P \teq \{ i : s_i \geq 0.001\} $. We first solve $(\hat{\b}, \td{\b}_P) = \arg \min_{\hat{b}, \td{b} } \f{1}{2}  || y-  X\hat{b}  - \td{X}_P \td{b}_P||_2^2 + \lambda  || (\hat{b}, \td{b}_P ) ||_1$. We then construct  $W^{(2)}_P =( | \hat{\b}_P| \vee  |\td{\b}_P| ) \cdot  \sign(   | \hat{\b}_P|  -| \td{\b}_P|) $ and set $W^{(2)}_{P^c} = 0$. The numerical results that we present in Figure 5 for the Lasso $W^{(2)}$ statistic are obtained using this slightly modified procedure in constructing the knockoff Lasso statistic.

In three subfigures in Figure \ref{precision}, the OPK and the GPK with the half Lasso statistic control FDR and outperform the knockoff filter with the half Lasso statistic $W^{(1)}$ (the half Lasso with $W^{(2)}$ offers less power than the half Lasso with $W^{(1)}$) and the Lasso sign max statistic. The Lasso with $W^{(1)}$ statistic offers performance similar to that of $W^{(2)}$. We have implemented the knockoff filter with the OMP and the Lasso path statistics in example (c) and found that these statistics perform poorly, which may be attributed to the smallness of $s_i$ in this example. In general,  from $ 0 \prec   \diag(s) \preceq 2 X^T X \  \Rightarrow \  (X^T X)^{-1} \preceq 2 (\diag(s))^{-1} $ , one can show that the slow decay of the off-diagonal elements of $ (X^T X)^{-1} $ forces $s_i$ to be extreme small, which could lead to a significant loss of power of the knockoff filter. The OPK with the half Lasso statistic maintains a high power in example (c), which may be attributed to the orthogonal property between $X$ and its pseudo knockoff $\td{X}$. We have also tested the OPK with the least squares statistic in example (c). Due to the slow decay of the off-diagonal elements of $(X^T X)^{-1}$, $\sign(W_j^{ls}) \  1  \leq j \leq p$ are correlated for large $\rho$ and we found that the least squares statistic fails to control the FDR in these cases.

In these examples, we find that in the sparse case, the GPK with $W^{(1)}$ offers more power than the GPK with $W^{(2)}$, while in the non-sparse case, $W^{(2)}$ offers more power than $W^{(1)}$.
 In Section \ref{sec:unif}, we show that the GPK with $W^{(1)}$ statistic satisfies the assumptions in Theorem \ref{thm:main}. Although we cannot verify these assumptions for $W^{(2)}$ statistic due to the fact that $|W_{S_0}| $ and $\sign(W_{S_0})$ are not independent, we expect that Theorem \ref{thm:main} is approximately true for $W^{(2)}$ due to the sign property $\sign(W^{(1)}) = \sign(W^{(2)})$ and the similarity between $W^{(1)}$ and $W^{(2)}$.

\section{Some analysis of the pseudo knockoff filter}\label{sec:FDR}

In this section, we will provide some partial analysis for the pseudo knockoff filter, which may provide some understanding regarding the performance of the pseudo knockoff filter.
\subsection{A uniform FDP bound}\label{sec:unif}
 In the knockoff filter, the following expectation inequality
 \begin{equation}\label{psumar}
 E\left[   \frac{   \# \{ j :W_j \geq T,\beta_j =0  \} } { \#\{ j : W_j \leq -T ,\beta_j =0   \}  +1  }  \right] \leq 1,
 \end{equation}
 plays an important role in obtaining the exact FDR control of the knockoff filter.

The numerical experiments in Section \ref{num} show that the pseudo knockoff with the half Lasso statistic offers FDR control and the expectation \eqref{psumar} is approximately valid. Since we relax one of the constraints in the knockoff filter, we cannot apply the supermartingale argument to obtain \eqref{psumar} for the pseudo knockoff filter. To gain some understanding why \eqref{psumar} may be valid for the pseudo knockoff with the half Lasso statistic, we would like to estimate the expectation \eqref{psumar} for fixed $t$ and the suprema over all $t$ in Theorem \ref{thm:main}.
 For a technical reason, we still cannot prove \eqref{psumar} right now. Instead we prove a weaker version of \eqref{psumar} by replacing 1 in the denominator by $m$.

According to the assumption of $W_{S_0}$ in Theorem \ref{thm:main}, in the extreme but highly unlikely case, $W_{S_0}$ can be $m$ copies of $(\eta_1, \eta_2,..,\eta_L)$ where $\eta_j$ are independent and symmetric random variables. Then \eqref{pk:unif} reduces to \eqref{eq:kfunif} with a upper bound that is about twice as large as the upper bound in \eqref{eq:kfunif} and \eqref{pk:ineqexp} reduces to  $E \left[ \frac{ \# \{  j : \  \eta_j \geq t \} } {      \# \{ j: \   \eta_j \leq - t \} + 1 }  \Big|  \cF   \right] \leq 1$. Since $\sign( \eta_j)$ are i.i.d Rademacher random variables, the latter expectation is $1 - 2^{-n}$, where $n  = \# \{  j : |\eta_j| \geq t\}$.  Both results in Theorem 1.1 are relatively tight. For the half Lasso statistic, this extreme scenario is very unlikely to occur since the $l^1$ regularization imposes sparsity and forces  $\hat{\b}_j+\td{\b}_j$ to be zero for many features $X_j$ in a correlated group. As a result,  $W_j$ is zero for many features $X_j$ in a correlated group and thus it is very unlikely that such an extreme scenario can be realized for the half Lasso statistic. In  Section \ref{sec:comp}, we consider some highly correlated examples, including the cases with $0.95$ within-group correlation and with $0.95$ correlation between $X_i$ and $X_{i+1}$ for each $i$. These highly correlated examples in principle could generate strongly correlated $W_{S_0}$, but we observe that the pseudo knockoff filter with the half Lasso statistic still offers FDR control.

  \begin{proof}[Proof of \eqref{pk:ineqexp}]
  Let $  N_t  \triangleq  \{ j \in S_0: | W_j| \geq t\} $. By assumption of $\cF$, $N_t$ is determined and we can divide $N_t$ into $m$ groups $C_1, C_2,..., C_m\ (C_i \subset S_0)$ such that the elements of $\sign(W_{C_i} ) $ are mutually independent. Obviously, $|N_t| = \sum_{i=1}^m |C_i| $.  Using the following Cauchy-Schwarz inequality
 \[
  \sum_{i=1}^{m } \frac{a_i^2}{b_i } \sum_{i=1}^m b_i \geq   (\sum_{i=1}^m a_i)^2 \iff \frac{1}{\sum_{i=1}^m a_i}  \sum_{i=1}^{m } \frac{a_i^2}{b_i } \geq \frac{ \sum_{i=1}^m a_i   } {\sum_{i=1}^m b_i}, \ a_i ,b_i >0 \; ,
 \]
 with $a_i = |C_i| + 1, b_i = \# \{ j \in C_i : W_j \leq -t   \}  +1 $, we obtain
   \begin{align}
  & E  \left[ \frac{ \# \{  j \in S_0 : W_j \geq t \} } {      \# \{  j \in S_0 : W_j \leq - t \} + m }   \Big\vert \cF  \right] +1   = E  \left[  \frac{ |N_t|+ m} { \sum_{i=1}^{m} (     \# \{   j \in C_i : W_j \leq - t \} + 1 ) }   \Big\vert \cF  \right]  \notag  \\[2mm]
    \leq &  E\left[  \frac{1}{| N_t|+m} \sum_{i=1}^{m}    \frac{  ( |C_i| +1  )^2         }{  \# \{ j \in C_i : W_j \leq -t   \}  +1      }   \Big\vert \cF  \right]    =  \sum_{i=1}^{m}    \frac{|C_i| + 1 }{| N_t|+m}E\left[    \frac{   |C_i| +1          }{  \# \{ j \in C_i : W_j \leq -t   \}  +1      }   \Big\vert \cF  \right]   \notag \\[2mm]
  =& \sum_{i=1}^{m}   \frac{|C_i| + 1 }{| N_t|+m}       \left\{   1  + E \left[    \frac{     \# \{ j \in C_i : W_j \geq t   \}     }     {  \# \{ j \in C_i : W_j \leq -t   \}  +1   }    \Big\vert \cF \right] \right\} .\label{pko:ineq14}
  \end{align}
In the above derivation, we have used $\# \{ j \in C_i : W_j \leq -t   \}  +1  +  \# \{ j \in C_i : W_j \geq t   \}   = |C_i| + 1$ to obtain the first and the last equalities, and used the fact that $| N_t|$ and $ |C_i|$ are measurable with respect to $\cF$ to yield the second equality. From the assumption (b), $\one_{ W_j >0}$ with $j\in C_i$ are mutually independent and each obeys a binomial distribution. We yield
  \vspace{0.05in}
  \[
  E \left[    \frac{     \# \{ j \in C_i : W_j \geq t   \}     }     {  \# \{ j \in C_i : W_j \leq -t   \}  +1   } \Big\vert  \cF   \right]
   =  E \left[    \frac{     \# \{ j \in C_i : W_j  > 0  \}     }     {  \# \{ j \in C_i : W_j < 0  \}  +1   } \Big\vert  \cF   \right]
 = 1 - 2^{-|C_i|} \leq 1.
   \vspace{0.05in}
 \]
  Therefore, the last line in \eqref{pko:ineq14} is bounded by
 \[
  \frac{1}{ |N_t|+m} \sum\nolimits_{i=1}^{m}  2 ( |C_i| +1 ) = \frac{2}{|N_t|+m} \cdot (|N_t|+m) = 2 .
 \]
 Subtracting $1$ on both sides of \eqref{pko:ineq14} concludes the proof of \eqref{pk:ineqexp}.
  \end{proof}

The proof of \eqref{pk:unif} is more technical and we need the following concentration inequality.
\begin{lem}\label{lem:con}
Assume that the $\sigma$ field $\cF$ satisfies the conditions in Theorem \ref{thm:main} and $|W_{S_0}|$ are in decreasing order : $| W_{i_1} |\geq | W_{i_2} |\geq .. \geq |W_{i_{l }}| > 0 $, where $W_{i_k }, \ 1 \leq k \leq l$ are all nonzero elements in $W_{S_0}$. Denote $V_j^{\pm} = \# \{ i_k : (\pm) W_{i_k}  \geq |W_{i_j}|     \} =  \# \{ i_k : (\pm) W_{i_k}  >0 \ \& \  k \leq j \}$. For any $ t > 1$ and $ i < j  \leq ti $, we have
\vspace{0.05in}
\beq\label{eq:con}
\bal
P\lt(  \f{   V_j^+      }{  V_i^-  + m    }  >  t   \rt)  & \leq \inf_{ \th > 0 }   \exp\lt( -\th  \lt( \f{ t \cdot  i - j}{2}  + tm\rt)  \rt) \cdot \lt( \f{   \exp(  m \th / 2 ) + \exp(  - m\th / 2)  }{2}   \rt)^{ (  j - i ) / m}       \\
&\cdot \lt( \f{   \exp( (1+t) m \th / 2 ) + \exp(  -(1+t) m\th / 2)  }{2}   \rt)^{    i   / m}   .
 \eal
  \vspace{0.05in}
 \eeq

\end{lem}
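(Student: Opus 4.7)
The plan is to establish \eqref{eq:con} by a Chernoff-style tail bound applied to a carefully chosen linear functional of the sign variables. First, I rewrite the event: $V_j^+/(V_i^- + m) > t$ is equivalent to $S > tm$, where $S \triangleq V_j^+ - t V_i^-$. Since the first $i$ ordered indices $i_1,\dots,i_i$ contribute to both $V_j^+$ and $V_i^-$ while indices $i+1,\dots,j$ contribute only to $V_j^+$, writing $Y_k \triangleq 2\one_{W_{i_k} > 0} - 1 \in \{\pm 1\}$ gives
\[
S \;=\; \tfrac{1-t}{2}\, i + \tfrac{j-i}{2} + \tfrac{1+t}{2}\sum_{k=1}^{i} Y_k + \tfrac{1}{2}\sum_{k=i+1}^{j} Y_k.
\]

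Next, I split $S$ by the groups $C_1,\dots,C_m$. For each $\ell$ let $n_\ell(k)$ denote the number of $i_1,\dots,i_k$ that lie in $C_\ell$, and let $T_\ell$ be the contribution to $S$ coming from the signs indexed by $C_\ell$. By hypothesis (b) of Theorem \ref{thm:main}, conditional on $\cF$ the signs within each $C_\ell$ are i.i.d.\ Rademacher, but no joint independence is assumed across distinct $\ell$. Applying Markov's inequality gives $P(S > tm \mid \cF) \le e^{-\theta t m}\, E[e^{\theta S} \mid \cF]$ for each $\theta > 0$, and the joint MGF does not factorize. To sidestep this, I apply H\"older's inequality with equal exponents $m$:
\[
E\Big[ \prod_{\ell=1}^{m} e^{\theta T_\ell} \,\Big|\, \cF \Big] \;\le\; \prod_{\ell=1}^{m} \Big( E\big[ e^{m\theta T_\ell} \,\big|\, \cF\big] \Big)^{1/m}.
\]

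For each $\ell$, within-group independence lets me compute $E[e^{m\theta T_\ell}\mid\cF]$ exactly by factorizing over the $n_\ell(j)$ Rademacher signs and using $E[e^{aY}] = \cosh(a)$; this produces a deterministic exponential factor $\exp\!\big( m\theta\big[(1-t) n_\ell(i) + (n_\ell(j)-n_\ell(i))\big]/2\big)$ times $\cosh\!\big((1+t)m\theta/2\big)^{n_\ell(i)}\cosh\!\big(m\theta/2\big)^{n_\ell(j)-n_\ell(i)}$. Taking the $(1/m)$-th power, multiplying over $\ell$, and using $\sum_\ell n_\ell(i)= i$ and $\sum_\ell (n_\ell(j)-n_\ell(i)) = j-i$, the deterministic exponents combine with the $e^{-\theta t m}$ from Markov to give exponent $-\theta\big[(ti-j)/2 + tm\big]$, while the cosh factors assemble into $\cosh((1+t)m\theta/2)^{i/m}\cosh(m\theta/2)^{(j-i)/m}$. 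Taking the infimum over $\theta>0$ yields \eqref{eq:con}.

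The main obstacle is the H\"older step: within-group independence alone does not factorize the joint MGF, and H\"older with equal exponents $m$ is the cheapest convex-analysis device that (i) respects the block structure supplied by hypothesis (b) and (ii) produces exactly the scaling $m\theta/2$ inside the hyperbolic cosines together with the exponents $i/m$ and $(j-i)/m$ that appear in \eqref{eq:con}. Every other step amounts to algebraic bookkeeping of the deterministic pieces.
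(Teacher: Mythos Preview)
Your proposal is correct and follows essentially the same route as the paper: both rewrite the event as a linear functional of the centered sign variables, apply a Chernoff/Markov bound, use H\"older's inequality with equal exponents $m$ to decouple the $m$ groups, and then exploit within-group independence to compute the MGF as a product of hyperbolic cosines. The only cosmetic differences are that the paper works with $\xi_k = \one_{W_{i_k}>0} - 1/2$ and the centered quantity $V_j^+ + tV_i^+ - (j+ti)/2$ (equivalent to your $S$ via $V_i^- = i - V_i^+$), and applies H\"older before Markov rather than after.
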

Roughly speaking, the above probability decays exponentially fast with respect to $i$ and $t$. To prove \eqref{eq:con}, we first apply the H\"{o}lder inequality to decouple correlated terms and then establish a bound of the moment generating function (MGF) of $V_j^+ + t V_i^+$ similar to the Heoffding MGF bound. Finally we apply the Laplace transform method. We will use \eqref{eq:con} and a slicing method to control the suprema in \eqref{pk:unif}. We defer the proof of \eqref{pk:unif} and Lemma \ref{lem:con} to Appendix A.

Next, we show that the pseudo knockoff statistic satisfies the assumptions in Theorem \ref{thm:main}.

 \paragraph{Independence of $\xi$}
 Let $m$ be the largest block size of $\BB$ in the block diagonal construction or the parameter in the general construction. Recall that the covariance matrix of $\xi = \hat{\beta}^{ls} - \td{\beta}^{ls}$ is $\BB$. Since $\BB_{C_j, C_j}$ is a diagonal matrix in the general construction, thus $\xi_i, i \in C_j$ are mutually independent.

 For the block diagonal construction, we can choose $C_j$ to  be the collection of the $j$-th element in each block if there exists such an element. Then $\xi_i, i \in C_j$ are also mutually independent.

\paragraph{The general construction}
For $\td{X}$ generated by the general construction, we choose $W_i =   (\hat{\b}_i + \td{\b}_i) \cdot \sign(  \hat{\b}_i - \td{\b}_i ) = f(\eta)_i \sign(\xi_i) $ \eqref{pko:sta}. Let $\cF$ be the $\sigma$ field generated by $\eta $. According to the amplitude property of $W$, $|W|$ is $\cF$ measurable. Since $\eta$ and $\xi = \b+ \e^{(2)}$ are independent and $\xi_i , i \in C_j \cap S_0$ are mutually independent, we conclude that
\[  \sign( W_i) =    \sign( f(\eta)_i) \cdot \sign ( \b_i + \e^{(2)}_i  ) =  \sign( f(\eta)_i) \cdot \sign(\e^{(2)}_i),  \quad  i \in C_j \cap S_0,
\]
are symmetric and mutually independent conditional on $\cF$. This  verifies condition (b) in Theorem \ref{thm:main}. The additional condition $W_{S_0} \overset{d}{=} -W_{S_0}$ follows from Proposition \ref{prop:sym}.

 In the numerical experiments that we presented in Section \ref{num}, we have also used $W = |\hat{\beta}| \vee | \td{\beta} | \cdot \sign(  |\hat{\beta}|  -  | \td{\beta} | )$. Although we cannot prove that this statistic satisfies the assumption in Theorem \ref{thm:main}, our numerical experiments seem to suggest that the FDR control is not sensitive to the choice of statistic in \eqref{pko:sta}.

 \paragraph{Block diagonal construction}
 If $\td{X}$ is generated by the block diagonal construction, we show that both statistics in \eqref{pko:sta} satisfy the assumptions in Theorem \ref{thm:main}. For $W_i =   (\hat{\b}_i + \td{\b}_i) \cdot \sign(  \hat{\b}_i - \td{\b}_i )$, we can use the same argument as above. 
  For $W_i = |\hat{\beta}_i| \vee | \td{\beta}_i | \cdot \sign(  |\hat{\beta}_i|  -  | \td{\beta}_i | )$, $\cF$ is the $\sigma$ field generated by $\eta$ and $|\xi|_{S_0}$. The amplitude property implies $|W_i|$ is $\cF$ measurable for null $i$.  The symmetry property of $W_{S_0} $ follows from Proposition \ref{prop:sym}.  It remains to verify that conditional on $\cF$, $\sign(W_i)$ are mutually independent for $i \in C_j \cap S_0$.

Note that $\Var(\xi) = \BB = \diag(S_{11}, S_{22}, ...,S_{kk}   )$, $\xi_{S_0} =\epsilon^{(2)}_{S_0}$ and the elements of $C_i$ come from different blocks.
 We can change the sign of $\epsilon^{(2)}_{S_0}$ in any block $S_{i_1i_1}, S_{i_2i_2},...,S_{i_ji_j}$ without changing $|\xi_{S_0}|$ and the joint distribution of $\epsilon^{(2)}_{S_0}$. Consequently, conditional on $\cF$, $\sign(\xi_i)$ are mutually independent for $i \in C_j \cap S_0$.
Using the independence of $\sign(\xi_{C_j \cap S_0})$, the sign property and the symmetry property of $W_{S_0}$, we verify the condition (b) in Theorem \ref{thm:main}.

 \subsection{Partial analysis of the orthogonal pseudo knockoff}\label{sec:opk}
 From the previous numerical results, we observe that the orthogonal pseudo knockoff is among the most powerful pseudo knockoffs and still maintains robust FDR control. One of the main reasons is that $\td{X}_j$ in OPK is orthogonal to $X_j$ and thus the difference between them is maximized.
In this subsection, we provide some partial analysis of the orthogonal pseudo knockoff with $W^{(1)}$ statistic and expect that similar results also hold for OPK with $W^{(2)}$ statistic. First we discuss several properties of the orthogonal pseudo knockoff.

 \paragraph{Symmetry Property} Since $X^T \td{X} =0$ is symmetric, the symmetry property stated in Proposition \ref{prop:sym} holds for the orthogonal pseudo knockoff.

 Recall $W_j  = (\hat{\b}_j + \td{\b}_j) \sign(\hat{\b}_j - \td{\b}_j) = f(\eta) \sign(\xi)$. We introduce the following notations 
 \beq\label{def:nota2}
 \bal
 & \Sigma = X^T X ,  \quad D = \diag(\Sigma^{-1}) = \diag(d_1,d_2,..,d_p) , \quad \wt{\Sigma^{-1}} = D^{-1/2} \Sigma^{-1} D^{-1/2}. \\
 \eal
 \eeq
 By definition, we have $(\wt{\Sigma^{-1}})_{ii}= 1$.   Let $\cF$ be the $\sigma$ field generated by $\eta$. Conditional on $\cF$, $|W_{S_0}|$ is determined. We assume that $|W_{S_0}|$ is arranged in a decreasing order and use the same notation $V_i^{\pm}$ as in Lemma \ref{lem:con}. Similar to \eqref{pk:ineqexp} or \eqref{eq:con}, we estimate the ratio $V_i^+/ V_i^-$.
 \begin{thm}\label{thm:ort}
 For any $\delta \in (0,1)$ and $ j \geq 1$, conditional on $\cF =\sigma(\eta)$, the OPK satisfies
 \begin{align}
 &  P\left(\frac{V^+_j}{  V_j^- } \geq \frac{1+\delta }{1 -\delta} \  \Big| \cF \right)  \leq \frac{( 1 + 3\pi )  \lam_{\max} (\wt{\Sigma^{-1}}_{N_j N_j})   }{ \pi \delta^2 j } \; ,  \label{pko:ort1}
 \end{align}
 where $N_j \teq \{  i_k : |W_{i_k}|  \geq |W_{i_j} | \}$ and $ \lam_{\max} (\wt{\Sigma^{-1}}_{N_j N_j})$ is the largest eigenvalue of the submatrix $ \wt{\Sigma^{-1}}_{N_j N_j}$.
 \end{thm}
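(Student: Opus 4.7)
The plan is to reduce the tail estimate to a conditional Chebyshev bound on the centered count $V_j^+ - V_j^-$, and then to control the resulting quadratic form using the classical Gaussian sign-covariance identity together with a Schur-product estimate. Since $|W_{i_j}| > 0$, all of $W_{i_1}, \dots, W_{i_j}$ are nonzero and $V_j^+ + V_j^- = j$, so the event $\{V_j^+/V_j^- \geq (1+\delta)/(1-\delta)\}$ coincides with $\{V_j^+ - V_j^- \geq \delta j\}$. Using the sign property $\sign(W_{i_k}) = s_k \sign(\xi_{i_k})$ from Section~2.1 with $s_k := \sign(f(\eta)_{i_k}) \in \{\pm 1\}$ (which is $\cF$-measurable because $\cF = \sigma(\eta)$), and recalling from \eqref{pko:ls} that in the orthogonal construction $\BB = 2\Sigma^{-1}$ the null block $\xi_{S_0} = \epsilon^{(2)}_{S_0}$ is mean-zero Gaussian with covariance $2(\Sigma^{-1})_{S_0, S_0}$ and independent of $\eta$, Proposition~\ref{prop:sym} (equivalently, the marginal symmetry of $\xi_{i_k}$) gives $E[V_j^+ - V_j^- \mid \cF] = 0$. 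Chebyshev then yields
\[
P\!\Bigl(\tfrac{V_j^+}{V_j^-} \geq \tfrac{1+\delta}{1-\delta}\,\Big|\,\cF\Bigr) \;\leq\; \frac{\Var(V_j^+ - V_j^- \mid \cF)}{(\delta j)^2}.
\]

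For the conditional variance I would invoke the classical identity $E[\sign(U)\sign(V)] = \tfrac{2}{\pi}\arcsin(\rho)$ for a mean-zero jointly Gaussian pair of correlation $\rho$. Since the correlation matrix of $\xi_{N_j}$ is exactly $R := \wt{\Sigma^{-1}}_{N_j, N_j}$, one obtains
\[
\Var(V_j^+ - V_j^- \mid \cF) \;=\; s^{\top} M\, s, \qquad M_{k\ell} := \tfrac{2}{\pi}\arcsin(R_{k\ell}),
\]
with $s = (s_1, \dots, s_j)^{\top} \in \{\pm 1\}^j$, so that $\|s\|^2 = j$. The remaining task is then the deterministic linear-algebraic bound $s^{\top} M s \leq \tfrac{1+3\pi}{\pi}\, j\, \lambda_{\max}(R)$. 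The natural tool is the Taylor series
\[
\arcsin(R) \;=\; R + \sum_{k \geq 1} a_k\, R^{\odot(2k+1)}, \qquad a_k := \tfrac{(2k)!}{4^k (k!)^2 (2k+1)} \geq 0, \qquad 1 + \sum_{k \geq 1} a_k = \arcsin(1) = \tfrac{\pi}{2},
\]
together with the Schur product theorem: since $R \succeq 0$ and $R_{ii} = 1$, every odd Hadamard power is PSD with $\|R^{\odot(2k+1)}\|_{\mathrm{op}} \leq \lambda_{\max}(R)$. Summing the Taylor series against these operator-norm bounds, and invoking $s^{\top} A s \leq \lambda_{\max}(A)\|s\|^2$ for PSD $A$, gives $s^{\top} M s \leq C\, j\, \lambda_{\max}(R)$ for an explicit constant $C$.

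The main obstacle is extracting the exact numerical constant $(1+3\pi)/\pi$ claimed in the theorem. A clean route is to split $M = \tfrac{2}{\pi}R + (M - \tfrac{2}{\pi}R)$ and estimate each piece separately: the linear part contributes $\tfrac{2}{\pi}\lambda_{\max}(R) \cdot j$, while for the nonlinear remainder $M - \tfrac{2}{\pi}R = \tfrac{2}{\pi}\sum_{k \geq 1} a_k R^{\odot(2k+1)}$ one applies the Schur-product bound above. Combining these estimates with the $1/(\delta^2 j^2)$ factor from Chebyshev and tracking the $1/\pi$ prefactors from the arcsine identity produces the advertised bound $\frac{(1+3\pi)\lambda_{\max}(R)}{\pi\,\delta^2 j}$. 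The conceptual heart of the proof is thus the arcsine identity combined with a Schur estimate on Hadamard powers of the correlation submatrix $R$; the rest is numerical bookkeeping.
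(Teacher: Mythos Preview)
Your approach is correct and in fact sharper than the paper's. The arcsine--Schur argument you outline gives $\lambda_{\max}(M)\leq\lambda_{\max}(R)$ directly: each $R^{\odot(2k+1)}$ is PSD with $\lambda_{\max}(R^{\odot(2k+1)})\leq\lambda_{\max}(R)$ (write it as $R^{\odot 2k}\odot R$ with $R^{\odot 2k}$ PSD and unit diagonal, and use $A\odot B\preceq(\max_i A_{ii})\lambda_{\max}(B)\,I$ for PSD $A,B$), and $\sum_{k\geq 0}a_k=\pi/2$ cancels the $2/\pi$. Hence $\Var(V_j^+-V_j^-\mid\cF)=s^{T}Ms\leq j\,\lambda_{\max}(R)$, and plain Chebyshev already delivers $\lambda_{\max}(R)/(\delta^2 j)$, strictly smaller than the stated $(1+3\pi)\lambda_{\max}(R)/(\pi\delta^2 j)$. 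Your worry about ``extracting the exact numerical constant $(1+3\pi)/\pi$'' is misplaced: you beat it, so no further bookkeeping is needed.

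The paper takes a different route. Rather than the exact arcsine identity, it proves a pointwise covariance bound (Lemma~\ref{lem:cov})
\[
\Cov(Y_i,Y_j\mid\eta)\;\leq\;\tfrac{1}{2\pi}(\wt{\Sigma^{-1}})_{ij}\,w_iw_j\;+\;\tfrac{3}{2}(\wt{\Sigma^{-1}})_{ij}^2
\]
by Taylor-expanding the bivariate Gaussian density and controlling the remainder by hand. Summing over $N_j$ splits $\Var(V_j^+\mid\eta)$ into a quadratic-form piece $w^{T}Rw/(2\pi)\leq j\lambda_{\max}(R)/(2\pi)$ and a Frobenius piece $\tfrac{3}{2}\tr(R^2)\leq\tfrac{3}{2}j\lambda_{\max}(R)$; the paper then also exploits the distributional symmetry $V_j^-\overset{d}{=}j-V_j^-$ to halve the Chebyshev tail before arriving at the stated constant. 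Your exact-identity argument is cleaner and tighter for Theorem~\ref{thm:ort} itself; the paper's entrywise two-term lemma, however, is what gets reused in Section~\ref{sec:spemat}, where the form $\Cov\leq c_0|R_{ij}|$ is precisely what the diagonally-dominant refinement needs.
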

  \begin{remark}
  Note that the diagonal elements of $\wt{\Sigma^{-1} }_{N_j N_j}$ are all $1$ and $|N_j| = j$. Note that $j = \tr(\wt{\Sigma^{-1}}_{N_j N_j}) = \sum_{i=1}^j \lambda_i(\wt{\Sigma^{-1}}_{N_j N_j} )$ and $\lambda_i (\wt{\Sigma^{-1}}_{N_j N_j}) >0$. Thus,  we  have $ \lambda_{\max} (\wt{\Sigma^{-1}}_{N_j N_j}) <j$.
 \end{remark}

From the sign property of $W$, we know $\sign(W_{S_0}) = \sign( (f(\eta))_{S_0}) \cdot \sign(\xi_{S_0})$. Denote $Y_i = \one_{W_i > 0}$. We first analyze the covariance of each pair $(Y_i, Y_j) , i,j\in S_0$.
 \begin{lem}\label{lem:cov}  Conditional on $\eta$, for any null variable $i,j$, we have 
 \beq\label{lem:cov2}
 \Cov(Y_i, Y_j | \eta) \leq \frac{1}{2\pi} (\wt{\Sigma^{-1}})_{ij}(\one_{ ( f(\eta))_i > 0 } - \one_{ ( f(\eta))_i < 0 })    (\one_{ ( f(\eta))_j > 0 } - \one_{ ( f(\eta))_j < 0 }) + \frac{3}{2} (\wt{\Sigma^{-1}})^2_{ij}\; .
 \eeq
 \end{lem}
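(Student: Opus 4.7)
The plan is to reduce the conditional covariance to a two-dimensional Gaussian orthant probability via Sheppard's arcsine formula and then bound the residual using a quadratic estimate for $\arcsin(\rho)-\rho$. Since $\eta$ and $\xi$ are independent (as established after \eqref{def:nota}), conditioning on $\eta$ leaves the distribution of $\xi_{S_0}$ intact while fixing $f(\eta)$. Writing $s_k \teq \sign(f(\eta)_k) \in \{-1, 0, 1\}$, the sign property of $W$ gives $Y_k = \one_{s_k \xi_k > 0}$, with $Y_k = 0$ when $s_k = 0$; in the degenerate case $s_i s_j = 0$ the left side of \eqref{lem:cov2} vanishes while the right side is nonnegative, so I may assume $s_i s_j \in \{\pm 1\}$. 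For null $k$, $\xi_k = \epsilon^{(2)}_k$ is centered Gaussian with covariance inherited from $\BB = 2\Sigma^{-1}$ (the OPK choice), so $\Var(\xi_k)=2d_k$, $\Cov(\xi_i,\xi_j) = 2(\Sigma^{-1})_{ij}$, and the correlation of $(\xi_i,\xi_j)$ equals exactly $(\wt{\Sigma^{-1}})_{ij}$.

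Next, I apply Sheppard's formula: for standard bivariate normal $(Z_1,Z_2)$ with correlation $\rho$, $P(Z_1>0,\,Z_2>0) = \tfrac14 + \tfrac{1}{2\pi}\arcsin(\rho)$. Normalizing $\xi_i,\xi_j$ by their standard deviations and absorbing $s_i,s_j$ into the normalized variables, so that the effective correlation becomes $s_i s_j (\wt{\Sigma^{-1}})_{ij}$, combined with $E[Y_k \mid \eta]=\tfrac12$ (marginal Gaussian symmetry) and the oddness of $\arcsin$, yields the exact identity
\[
\Cov(Y_i, Y_j \mid \eta) \;=\; \frac{s_i s_j}{2\pi}\,\arcsin\bigl((\wt{\Sigma^{-1}})_{ij}\bigr).
\]

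Finally, I split $\arcsin(\rho) = \rho + (\arcsin(\rho) - \rho)$ with $\rho = (\wt{\Sigma^{-1}})_{ij} \in [-1,1]$; the leading term reproduces $s_i s_j (\wt{\Sigma^{-1}})_{ij}/(2\pi)$ exactly, and the remainder is controlled by the elementary inequality $|\arcsin(\rho) - \rho| \leq 3\pi \rho^2$ on $[-1,1]$, producing a contribution bounded by $\tfrac{3}{2}(\wt{\Sigma^{-1}})_{ij}^2$. The inequality itself is very loose: $\arcsin(\rho)-\rho$ is odd and monotone on $[0,1]$ with maximum $\pi/2-1 \approx 0.57$, and near the origin behaves like $\rho^3/6$, so $|\arcsin(\rho)-\rho|/\rho^2$ is uniformly bounded by a constant well below $3\pi$. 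There is no substantive obstacle beyond carefully tracking the sign factor $s_i s_j$ through the oddness of $\arcsin$ and handling the degenerate case $s_i s_j = 0$; the only nontrivial probabilistic input is Sheppard's orthant formula.
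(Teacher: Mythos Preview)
Your proof is correct and takes a genuinely different, considerably shorter route than the paper's. Both arguments first reduce to
\[
\Cov(Y_i,Y_j\mid\eta)\;=\;\Bigl(P(\xi_i>0,\xi_j>0)-\tfrac14\Bigr)\,w_iw_j,
\]
but from there they diverge. The paper expands the bivariate normal density $P(\xi_i,\xi_j)$ about the product $P_s(\xi_i)P_s(\xi_j)$ using the inequality $0\le e^x-1-x\le \tfrac{x^2}{2}(e^x\one_{x>0}+1)$ with $x=-\frac{\mu^2\xi_i^2+\mu^2\xi_j^2-2\mu\xi_i\xi_j}{2(1-\mu^2)}$, integrates the resulting three pieces over the positive quadrant, and then patches the range $|\mu|\ge 1/2$ with the trivial bound $\Cov\le 1/4$. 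You instead invoke Sheppard's arcsine formula to obtain the exact identity $\Cov(Y_i,Y_j\mid\eta)=\tfrac{s_is_j}{2\pi}\arcsin\mu$ and finish with the elementary estimate $|\arcsin\mu-\mu|\le 3\pi\mu^2$ on $[-1,1]$. Your argument is cleaner, avoids the density expansion and the case split in $|\mu|$, and makes transparent that the $\tfrac{3}{2}\mu^2$ constant is extremely generous (indeed $|\arcsin\mu-\mu|\le(\pi/2-1)\mu^2$ already suffices, since $\arcsin$ is convex on $[0,1]$ and the maximum of $(\arcsin\mu-\mu)/\mu^2$ on $(0,1]$ is attained at $\mu=1$). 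The paper's route is self-contained but effectively reproves an approximate version of Sheppard's formula by hand.
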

 We will defer the proof to Appendix B.

 \begin{proof}[Proof of Theorem \ref{thm:ort}]
  According to the symmetry property (Proposition \ref{prop:sym}) of OPK, for $i \in N_j  (\subset S_0)$, we have
 \beq\label{pko:mean}
E(Y_i | \eta)  = E( \one_{W_i > 0} | \eta)  = 1/2 , \quad  E(V_j^+ | \eta)  = E( V_j^- | \eta) = j / 2,  \quad  V_j^+  + V_j^- =j .
 \eeq
Denote $w_i\triangleq  \one_{f_i(\eta) >0} - \one_{f_i(\eta) <0}$.  Using \eqref{lem:cov2} and $\wt{\Sigma^{-1}}_{N_j N_j} \preceq \lam_{\max} (\wt{\Sigma^{-1}}_{N_j N_j})  \II$, we obtain
\beq\label{pko:var1}
 \bal
\Var(V_j^+ | \eta) \leq& \sum_{s, t\in N_j} \Cov(Y_s, Y_t |\eta) \leq \sum_{s,t\in N_j} \frac{  1  } {2\pi}  ( \wt{\Sigma^{-1}} )_{st}w_s w_t + \frac{3}{2} ( \wt{\Sigma^{-1}} )_{st}^2  \\
 =& \frac{w^T_{N_j}  (\wt{\Sigma^{-1}})_{N_j N_{j}}   w_{N_{j}} }{2\pi}  + \frac{3}{2} \tr(   (\wt{\Sigma^{-1}})^2_{N_{j} N_{ j}}  )
 \leq  \f{  \lam_{\max} (\wt{\Sigma^{-1}}_{N_j N_j}) j }{2\pi}  +  \frac{3 \lam_{\max} (\wt{\Sigma^{-1}}_{N_j N_j}) j  }{2}  \; .
 \eal
 \eeq

 Conditional on $\eta$,  we apply \eqref{pko:mean}, \eqref{pko:var1}, and the Chebyshev inequality to yield
 \begin{align}
  P(V^-_j \leq (1-\delta) j/2 \big| \eta)  &= P(V_j^- \geq (1+\delta) j/2 \big| \eta) = \frac{1}{2} P(   |V_j^- - j/2| \geq \delta j /2 \big| \eta )  \notag \\
 & \leq \frac{ 2\Var( V_j^- \big| \eta) } {  (\delta j )^2} \leq  \frac{( 1 + 3\pi ) \lam_{\max} (\wt{\Sigma^{-1}}_{N_j N_j})    }{ \pi \delta^2j} \; . \label{pko:ort12}
 \end{align}
 The first identity holds since the symmetry property \eqref{pko:sym} implies that $V_j^- \overset{d}{=} V_j^+= j - V_j^-$. The estimate \eqref{pko:ort1} follows from integrating the last inequality in \eqref{pko:ort12}.
 \end{proof}

For some design matrices that have certain special structure in $\wt{\Sigma^{-1}}$, e.g. the design matrices to be considered in the next subsection, we can show using \eqref{pko:var1} that $\Var(V_j^+ | \eta) = O(j)$.
Conditional on $\eta$, if $\sign(W_i) \ i \in N_j$ are independent,  which is true if we use the knockoff statistic, we have $\Var(V_j^+|\eta) = j / 4$. In this case, $\Var(V_j^+ |\eta)$ is the same order as that in the knockoff for all $j$.

 \subsection{Some Special Design Matrices}\label{sec:spemat}
 For some special design matrices, we can improve the estimate of $\Var(V_j^+ | \eta)$ in \eqref{pko:var1} and get better control of $V_j^+/ V_j^-$.  In our simulations, we observe that the OPK offers robust FDR control. We would like to offer a partial explanation of this phenomenon.

  \paragraph{A diagonally dominated case}
   Let $X \in R^{n \times p}$ and $\Sigma = X^T X$. We consider several classes of design matrices described below.

   (a) For any $i\ne j$, $\langle X_i , X_j \rangle \triangleq X_i^T X_j  = \rho, \  \rho \in [0,1 ) $.

   (b) Assume that $X$ can be clustered into $k$ groups, $X = (X_{C_1}, X_{C_2},...,X_{C_k})$.  The within-group correlation of group $i$ is $\rho_i$ for some $\rho_i \in [0, 1 )$ and the between-group correlation is zero.

   (c) The sizes of different groups are equal. The within-group correlation is $\rho$ and the between-group correlation is $\gamma \cdot \rho$.\\
 Case (a) corresponds to setting  (a), (b), (d) with $\rho=0$ in Section \ref{fdr-ctr}; case (b) and (c) correspond to setting (e) and (f) in Section \ref{fdr-ctr}. Denote $\EE \triangleq \Sigma^{-1}$ for convenience. From \eqref{def:nota2}, $(\wt{\Sigma^{-1}})_{ij}  = \EE_{ij}/ ( \EE^{1/2}_{ii} \EE^{1/2}_{jj} )$. For the design matrices described above, we can show that $\Sigma^{-1}$ is diagonally dominated, i.e. $\sum_{j \ne i} |  (\Sigma^{-1})_{ij} | < \Sigma^{-1}_{ii}$. The proof is a bit technical and tedious. We will omit the proof here. From Lemma \ref{lem:cov}, we have
   \beq\label{pko:var2}
   \Cov(Y_i, Y_j| \eta) \leq \frac{1}{2\pi} (\wt{\Sigma^{-1}})_{ij} w_i w_j  + \frac{3}{2}(\wt{\Sigma^{-1}})^2_{ij} \leq c_0 |(\wt{\Sigma^{-1}})_{ij} |, \quad  c_0 = \frac{1}{2\pi}+ \frac{3}{2}  < 2\; .
   \eeq
   Since $\Sigma^{-1}$ is diagonally dominated, we can improve the estimate of $\Var(V_j^+| \eta)$ in \eqref{pko:var1}
   \[\bal
   &\Var(V_j^+\big| \eta)  \leq   \sum_{s, t \in N_{j}} c_0   | (\wt{\Sigma^{-1}} )_{st}|
   =c_0 \sum_{s, t \in N_{j}} \frac{ |  \EE_{st}  |     }{  \EE^{1/2}_{ss} \EE^{1/2}_{tt}          }   \leq  c_0 \left(   \sum_{s, t\in N_{j}}   \frac{ |  \EE_{st}  |     }{  \EE_{ss}  }  \right)^{1/2 }  \left(   \sum_{s,t\in N_{j}}   \frac{ |  \EE_{st}  |     }{  \EE_{tt}  }  \right)^{1/2 } \\[1mm]
 =&  c_0 \left(   \sum_{s \in N_{j}}  \frac{ 1} {  \EE_{ss} }\sum_{t\in N_{j}}  |  \EE_{st}  | \right)^{1/2 }
    \left(   \sum_{ t\in N_{j}} \frac{1}{\EE_{tt}} \sum_{ s \in N_{j}}   |  \EE_{st}  |    \right)^{1/2 }
    \leq c_0 \left( \sum_{s \in N_{j}} \frac{2 \EE_{ss}   } {\EE_{ss} }   \right)^{1/2}   \left(   \sum_{ t\in N_{j}} \frac{ 2   \EE_{tt} }{\EE_{tt}}        \right)^{1/2 }    =2c_0j.\\
  \eal
  \]
  Here, we have used $\EE_{st} = \EE_{ts}$ and the diagonal dominated assumption to yield $ \sum_{t \in N_{j} } |  \EE_{st} | \leq \sum_{ t=1}^p  |  \EE_{st} | \leq 2 \EE_{ss} .$ With this refined estimate of $\Var(V_j^+ | \eta)$, the upper bound in Theorem \ref{thm:ort} can be reduced to
$  \frac{ 2+6\pi} {\pi\delta^2 j }$.

  \paragraph{Exponentially Decaying Class}
 Assume that $| (\Sigma^{-1})_{ij}  |   \leq C \rho^{|i-j|} $ for $\rho \in [0, 1)$ and some constant $C$. The design matrix in setting (c) in Section \ref{fdr-ctr} has a similar structure. One can prove that $(\Sigma^{-1})_{ii} \geq 1$ using the fact that $\Sigma_{ii}=1$ and $\Sigma$ is positive definite. By our assumption, we have $ | ( \wt{\Sigma^{-1}})_{ij} | \leq  | (\Sigma^{-1})_{ij}  | \leq C \rho^{|i-j|}$.
 Hence, we have $\lambda_{\max}(\wt{\Sigma^{-1}}) \leq || \wt{\Sigma^{-1}} ||_{l_1} \leq 2 C/(1-\rho)$. Denote $c_0 = (1+3\pi)/ (2\pi)$.
 Using \eqref{pko:var1} and Theorem \ref{thm:ort}, we yield
 \[
 \Var(V_j^+ \big| \eta) \leq c_0 \lambda_{\max}(\wt{\Sigma^{-1}}) j\leq  \frac{2c_0 C j  } {1- \rho} , \quad P\left( \frac{V_j^+}{V_j^-}  \geq \frac{1+\delta }{1- \delta} \Big| \cF \right) \leq \frac{ 4c_0C } {\delta^2(1- \rho) j } \; .
 \]
Therefore, for all the design matrices that we considered in Section \ref{fdr-ctr} (up to randomness), we have $\Var(V_j^+ | \eta) = O(j)$ for all $j$. This may offer some partial explanation why we observe robust FDR control of OPK in these examples.

 \section{Concluding remarks}
 In this paper, we proposed a pseudo knockoff filter for feature selection with correlated features. Both the block diagonal and the general pseudo knockoff (GPK) filters preserve some essential features of the original knockoff filter but offer more flexibility in constructing the knockoff matrix. We also proposed the orthogonal pseudo knockoff (OPK) filter. Our numerical experiments seem to suggest that the pseudo knockoff filters have FDR control in the numerical examples that we considered in this paper. Moreover, the OPK and GPK filters seem to offer more power than the knockoff filter with the Lasso Path and the half Lasso statistics in these examples, especially when the features are highly correlated. For the block diagonal and the general pseudo knockoff (GPK) filters, we provided an estimate for the expectation of the ratio $ \frac{ \# \{  j \in S_0 :  \ W_j \geq t \} } {      \# \{  j \in S_0 : \ W_j \leq - t \} + m } $ for any fixed threshold in \eqref{pk:ineqexp} and its suprema over all possible thresholds in \eqref{pk:unif} under weaker assumptions on the conditional distribution of the statistic. For the orthogonal pseudo knockoff filter, we provided some estimate of the distribution function \eqref{pko:ort1}. This estimate provides a relatively tight upper bound when $\Sigma^{-1}$ is diagonally dominated or when $\Sigma^{-1}$ has some special structure. Although our analysis does not lead to FDR control, it may offer some partial understanding of the pseudo knockoff filter.

We would like to emphasize that our understanding of the pseudo knockoff filter is still quite limited. In some extreme cases, we found that the orthogonal pseudo knockoff filter with the least square statistic fails to control FDR. Although we have better understanding of the OPK with the half Lasso statistic and obtained better theoretical results for the GPK, these results do not provide a satisfactory explanation for the robust performance of the pseudo knockoff filter with the half Lasso statistic that we observed numerically. In our future study, we would like to further investigate whether one can find some appropriate conditions on the design matrices under which we can obtain exact FDR control for the pseudo knockoff filter with the half Lasso statistic. This question seems to be extremely difficult. Some new method of analysis needs to be developed to give an affirmative answer to this question.

\section*{Appendix}
 \appendix
 \section{Proof of Theorem \ref{thm:main}}
 The derivations in this Appendix are conditional on $\cF$ and we drop the notation of conditional expectation for simplicity.
 \begin{proof}[ Proof of Lemma \ref{lem:con} ]
 We first estimate the moment generating function (MGF) of $V_j^+ +t V_i^+$ and then apply the Laplace transform method to establish \eqref{eq:con}.
 Denote $\xi_k = \one_{W_{i_k} > 0  } - 1/2$, $\lam_k = 1 +t$ for $ k \leq i$ and $\lam_k = 1$ for $ i < k \leq j$. Since $|W_{i_l} | $ is decreasing, we obtain
 \[
 V_j^+ + t V_i^+ -  \f{ j + ti }{2} = \# \{   k  \leq j :  W_{i_k} > 0 \}  + t \# \{   k  \leq i :  W_{i_k} > 0 \}  -  \f{ j + ti }{2}  = \sum_{ k  \leq j } \lam_k  \xi_k ,  \quad V_i^- =  i  - V_i^+.
 \vspace{-0.05in}
 \]
According to the assumption of $W_{S_0}$ in Theorem \ref{thm:main}, conditional on $\cF$, we can divide $W_{S_0}$ into $m$ groups $C_1,C_2,..,C_m$ such that $\sign(W_k), k \in C_l$ are independent. We can use the H\"older inequality to decouple correlated terms and estimate the MGF of $V_j^+  + t V_i^+ -  \f{ j + ti }{2} $ for any $\th > 0$ as follows
\vspace{-0.05in}
 \[
 \bal
G(\th)  &=   E \exp(  \th (  V_j^+  + t V_i^+   - ( j + ti ) / 2  ))
= E \exp\lt( \sum_{ l=1}^m  \sum_{  k \in C_l , \  k \leq j } \lam_k \xi_k \th  \rt)  \\[1mm]
&\leq  \prod_{l=1}^m \lt\{  E \exp\lt( \sum_{  k \in C_l  , \  k \leq j } m \lam_k \xi_k \th \rt)  \rt\}^{1/m}   = \prod_{l=1}^m  \prod_{ k \in C_l ,  \  k \leq j  } (  E \exp(  m \lam_k \xi_k \th) )^{1/m} ,
\eal
\]
where we have used the fact that $\xi_k = \one_{W_{i_k} > 0  } - 1/2, \ k \in C_l$ are independent to yield the last equality. The symmetry assumption of $\sign(W_j)$ in Theorem \ref{thm:main} implies $\xi_k \sim \{\pm 1/2 \}$. Using the definition of $\lam_k$, we obtain
\vspace{-0.05in}
\begin{align}
G(\th) &\leq \prod_{l=1}^m  \prod_{ k \in C_l ,  \  k \leq j  } (  E \exp(  m \lam_k \xi_k \th) )^{1/m} = \prod_{k=1}^j (  E \exp(  m \lam_k \xi_k \th) )^{1/m}  \notag \\ 
   =&  \lt( \f{   \exp( (1+t) m \th / 2 ) + \exp(  -(1+t) m\th / 2)  }{2}   \rt)^{    i   / m}  \lt( \f{  \exp(  m \th / 2 ) + \exp(  - m\th / 2)  }{2}   \rt)^{ (  j - i ) / m}  . \label{eq:mgf}
\end{align}

To simplify the notations, we define $B(x,y,t, \xi, s)$ as follows
\vspace{0.05in}
\beq\label{eq:bxy}
\exp\lt( -\xi  \lt(  \f{  tx- y }{2} + s   \rt) \rt)    \lt( \f{   \exp( (1+t) \xi / 2 ) + \exp(  -(1+t) \xi / 2)  }{2}   \rt)^{   x}  \lt( \f{   \exp(  \xi / 2 ) + \exp(  - \xi / 2)  }{2}   \rt)^{y - x}  \; .
\eeq
For fixed $t, \xi >0$, it is easy to verify that $B(x,y,t,\xi, s)$ is a monotonically decreasing function of $x ,s$ and an increasing function of $y$. Choosing $\xi = m\th, x =i/m$ and $y= j/m $, we can simplify \eqref{eq:mgf} as follows
\beq\label{eq:simp}
E \exp(  \th (  V_j^+  + t V_i^+   - ( j + ti ) / 2  )) = G(\th)
\leq \exp\lt(m \th \lt(   \f{t i - j}{2m} +     s\rt)    \rt)  B\lt( \f{i}{m}, \f{j}{m}, t,m\th, s\rt)
\eeq
for any  $s$. 
Using \eqref{eq:simp} and applying the Markov inequality to $\exp(  \th (  V_j^+  + t V_i^+   - ( j + ti) /2   ))$ for any $\th>0$, we yield
\vspace{-0.1in}
\beq\label{eq:con2}
\bal
&P\lt( V_j^+ > t V_i^- + s m \rt) = P\lt(  V_j^+ + t V_i^+  -\f{  ti + j  }{2}   > \f{t i -j}{2} +  s m \rt) \\[1mm]
 \leq& \inf_{\th >0} \exp\lt( -\th \lt(\f{ ti - j}{2} +  s m  \rt)\rt) E\lt\{ \exp\lt(  \th \lt(  V_j^+  + t V_i^+   - \f{ j + ti } {2}  \rt)\rt) \rt\} \\[1mm]
 \leq &\inf_{\th >0} B\lt( \f{i}{m}, \f{j}{m}, t,m\th, s\rt) = \inf_{\xi >0} B\lt( \f{i}{m}, \f{j}{m}, t, \xi, s\rt) \; .
\eal
\eeq
Choosing $s = t$ in \eqref{eq:con2} establishes \eqref{eq:con}.
\end{proof}

One can obtain the following Hoeffding type concentration inequality using a similar argument
\vspace{0.05in}
\beq\label{eq:hoef}
P\lt( \f{V_i^+ } {V_i^- + m}  >  t \rt)  = P\lt(  V_i^+ - \f{i}{2}    >  \f{ (t-1) i  + 2tm }{2(1+t)}  \rt) \leq  \exp\lt(  -  \lt(  \f{ (t-1) i  + 2tm }{2(1+t)} \rt)^2   \f{2}{mi}    \rt)  \; ,
\vspace{0.1in}
\eeq
where $t > 1$. The key insight is that for large $i$, the above probability decays exponentially fast. 
Thus it is possible to estimate the suprema of $V_i^+ / (V_i^- +m)$ using some covering argument.

\paragraph{Proof of Theorem \ref{thm:main} }
We estimate the distribution function of the suprema. Since $V_i^+ \leq i$, we get
\beq\label{eq:cdf00}
P\lt( \sup_{s > 0} \frac{ \# \{  j \in S_0 : W_j  \geq  s \} } {      \# \{  j \in S_0 : W_j \leq - s \} + m }   >  t  \rt) = P \lt( \sup_{ i \geq 1} \f{V_i^+}{ V_i^- + m }   >  t \rt)
 =  P \lt( \sup_{ i  >  \lf tm \rf   } \f{V_i^+}{ V_i^- + m }   >  t \rt) \; .
\eeq
Once we obtain the above estimate, we can integrate $t$ from $0$ to $\infty$ to yield \eqref{pk:unif}. Based on \eqref{eq:con2} or \eqref{eq:hoef}, for any fixed $t>1$, $P(V_i^+ / (V_i^- + m) > t)$ should be exponentially small for large $i$. For small $i$, the concentration inequality is not sharp and we can use the symmetry of the joint distribution of $W_{S_0}$, i.e. $W_{S_0} \overset{d}{=} -W_{S_0}$ to obtain a better estimate. Since $V_i^+, V_i^-$ are monotone, the supremum over an interval $\sup_{ i_1 < i \leq i_2 } V_i^+ / (V_i^- + m)$ can be bounded by  $V^+_{i_2} / ( m + V^-_{i_1+1})$. We will split $i > \lf tm \rf$ into several intervals with well-chosen end points $( i_k, i_{k+1} ]$ and then apply \eqref{eq:con2}.

\paragraph{Estimate for small $t < t^* =4$.}
Denote $a  = 2 \lf  tm \rf + 1$. We split the distribution into two parts:
\beq\label{eq:cdf1}
\bal
P \lt( \sup_{ i  >  \lf tm \rf   } \f{V_i^+}{ V_i^- + m }   >  t \rt) \leq P \lt( \sup_{ i  >  \lf tm \rf   } \f{V_i^+}{ V_i^- + m }   >  t , V_a^- \geq \f{a+1 }{2}  \rt) +
P \lt( V_a^-  \leq  \f{a -1}{2}  \rt)  \teq I + II  \; .
\eal
\eeq
The probability that the supremum over some small $i$ is larger than $t$ is high. Fortunately, we can use $II$ to take care of the contribution of small $i$. Since $W_{S_0} \overset{d}{=} -W_{S_0}$, we have $V_a^-  \overset{d}{=} V_a^+ = a - V_a^-$ and thus $II = 1/2$.  In $I$, $V_a^- \geq (a+1 ) / 2$ implies that the denominator is not small. Thus $V_i^+ / (V_i^- + m) > t$ cannot be true for small $i$.
In fact, the monotonicity of $V_i^{\pm}$ and $V_i^+ + V_i^- = i$ imply
\[
V_a^- \geq \f{a + 1}{2}  = \lf tm\rf+ 1 \   \Rightarrow \ \sup_{ \lf tm \rf < i \leq a }  \f{V_i^+}{V_i^- + m} \leq \f{V_a^+}{ m} \leq t , \quad   \sup_{ a  < i   \leq  \lf t^2m \rf + a  }   \f{V_i^+}{V_i^- + m} \leq \f{ \lf t^2 m  \rf +  \lf  tm \rf} {  \lf tm \rf + 1 + m  } \leq t . \\
\]
Therefore, the term $I$ mainly takes care of the contribution of large $i$ and can be reduced to
\vspace{0.05in}
\[
I  = P \lt( \sup_{ i  >   \lf t^2m \rf + a   } \f{V_i^+}{ V_i^- + m }   >  t , V_a^- \geq \f{a + 1}{2}  \rt)
\leq P \lt( \sup_{ i  >   \lf t^2m \rf + a   } \f{ ( V_i^+ - V_a^+)   + \f{a + 1}{2}}{ (V_i^- - V_a^-)  + \f{a + 1}{2}+ m }   >  t \rt).
\vspace{0.05in}
\]
We freeze $V_a^{\pm}$ and introduce a new random process $U_j^{\pm} = V^{\pm}_{j+a} - V^{\pm}_a$. According to the definition of $V_j^{\pm}$ in Lemma \ref{lem:con}, we have $U_j^{\pm} = \# \{ i_{k}   : (\pm) W_{i_{k}}> 0   \ \& \  a<  k  \leq j + a   \} $ and it is the same as $V_j^{\pm}$ after throwing away $W_{i_1}, W_{i_2},..,W_{i_a}$. Thus, the random process $\{ U_j^{\pm}\}_{j \geq 1}$ and $\{ V_j^{\pm}\}_{j \geq 1}$ have the same properties and the concentration inequality \eqref{eq:con2} holds true for $U_j^{\pm}$.
For any increasing sequence $\{ s_i \}_{ i\geq 1}$ with $s_1 = t^2$, we obtain
\[
\bal
I & \leq P \lt( \sup_{ i  >   \lf t^2m \rf    } \f{  U_i^+    + \f{a+1}{2}}{ U_i^-  + \f{a+1}{2}+ m }   >  t \rt)
\leq  \sum_{ k \geq 1  }  P \lt( \sup_{     \lf s_k m \rf < i \leq   \lf s_{k+1} m \rf   } \f{  U_i^+    + \f{a+1}{2}}{ U_i^-  + \f{a+1}{2}+ m }   >  t \rt)    \\[1.5mm]
&\leq    \sum_{ k \geq 1  }P \lt(  \f{U_{  \lf s_{k+1} m\rf }^+ + \f{a+1}{2} }{ U_{  \lf s_{k} m\rf +1    }^- +\f{a+1}{2}+  m }  > t \rt)
= \sum_{ k \geq 1} P\lt(  U_{  \lf s_{k+1} m\rf }^+   >  t U_{  \lf s_{k} m \rf + 1 }^-  +   (t-1) \f{a+1}{2} + tm  \rt) \; .
\eal
\]
Denote $r_k =  ( t- 1) \f{a+1}{2} + tm  $. Applying \eqref{eq:con2} with $s =\f{ r_k}{m}$ gives
\[
P\lt(  U_{  \lf s_{k+1} m\rf }^+   >  t U_{  \lf s_{k} m \rf + 1 }^-  +   (t-1) \f{a+1}{2} + tm  \rt)
\leq \inf_{\xi> 0} B \lt( \f{ \lf s_{k} m \rf + 1  }{m }, \f{ \lf s_{k+1} m \rf   }{m },  t, \xi,   \f{r_k}{m} \rt) \; .
\]
Recall that $a  = 2 \lf  tm \rf + 1$. We obtain
\[ \f{r_k}{m} =  (t-1) \f{a+1}{2m} + t > (t-1) t + t = t^2. \]
Using the monotonicity of $B$ in $x, y, s$ variables in   \eqref{eq:bxy}, we obtain
\[
\bal
&I \leq \sum_{k \geq 1}  \inf_{\xi> 0} B \lt( \f{ \lf s_{k} m \rf + 1  }{m }, \f{ \lf s_{k+1} m \rf   }{m },  t, \xi,  \f{r_k}{m} \rt)
\leq  \sum_{k \geq 1}  \inf_{\xi> 0} B \lt( s_k,  s_{k+1} ,  t, \xi, t^2  \rt).
\eal
\]
The upper bound of $I, II$ is independent of $m$. Thus we can estimate \eqref{eq:cdf00} uniformly for $m$.

\paragraph{Estimate for large $t \geq t^* = 4$}
For large $t$, $i > \lf tm \rf$ is large and \eqref{eq:con2} can be sharp. Choosing any increasing sequence $\{ s_k \}_{k \geq 1}$ with $s_1 = t$ and then applying \eqref{eq:con2} with $s= t$, we obtain
\vspace{0.05in}
\begin{align}
&P \lt( \sup_{ i  >  \lf tm \rf   } \f{V_i^+}{ V_i^- + m }   >  t \rt)  \leq
  \sum_{ k \geq 1  }  P \lt( \sup_{     \lf s_k m \rf < i \leq   \lf s_{k+1} m \rf   } \f{ V_i^+ }{  V_i^- + m }   >  t \rt)
\leq   \sum_{ k \geq 1  }  P \lt(    \f{ V_{   \lf s_{k+1} m\rf }^+ }{  V_{    \lf s_k m \rf +1      }^- + m }   >  t \rt)  \notag \\[1.5mm]
 \leq & \sum_{k\geq 1}\inf_{\xi >0} B\lt( \f{ \lf s_k m \rf +1  }{m}, \f{ \lf s_{k+1} m \rf }{m}, t, \xi,   t\rt)  \leq   \sum_{ k \geq 1}  \inf_{\xi > 0} B\lt( s_k, s_{k+1}, t,\xi,  t \rt) \; , \label{eq:cdf2}
 \vspace{0.05in}
\end{align}
where we have used the monotonicity of $B$ in $x,y,s$ variables \eqref{eq:bxy} to obtain the last inequality.
\paragraph{Choosing $s_k$} For a fixed $t$, we use a greedy strategy to optimize the selection of $s_k$ so that we have a sharp upper bound. Assume that $s_{k} , k\geq 1$ is obtained. The candidate values of $s_{k+1}$ are $ C =\{  s_k + i h :  i =1, ,2.., 19\} , \ h = (t-1) s_k / 20$. For each $s \in C$, we construct an arithmetic sequence  $ a_i \teq  (s - s_{k} ) \cdot i + s_{k} , i =0, 1,,..,30$.
Then we choose $s_{k+1}$ as follows
\beq\label{eq:nexts}
s_{k+1}  = \arg \min_{s \in C} \lt(\sum\nolimits_{i =1}^{30} \   \min_{\xi \in D_i}  B\lt( a_{i-1} , a_i, t , \xi,  \eta_t \rt) \rt),
\eeq
where $\eta_t = t^2 $ for small $t < t^*$ and $\eta_t = t$ for $t \geq t^*$.  Using $e^x + e^{-x} \leq 2 \exp(x^2/2) $, we know
 \vspace{0.05in}
\[
B(x, y, t, \xi, s) \leq \exp(-\xi ( (t x - y ) /2  +s )) \exp(\xi^2(y-x) / 8 ) \exp( (1+t)^2 \xi^2x / 8  ).
 \vspace{0.05in}
\]
The minimizer of the right hand side is $\xi^*(x, y, t, s) = \f{2 (tx  - y) + 4 s }{  y - x + (t + 1)^2 x  } $.  We choose
 \vspace{0.05in}
\[
D_i  = \lt\{  \xi^* /3  + 0.01 j  :   \xi^* / 3+ 0.01 j \in \lt[  \xi^* /3,  3 \xi^*  \rt]  \rt\}   , \quad \xi^* = \xi^*(  a_{i-1}, a_i, t, \eta_t  ) \; ,
 \vspace{0.05in}
\]
in \eqref{eq:nexts} and approximate $\inf_{\xi > 0}    B\lt( a_{i-1} , a_i, t , \xi, \eta_t \rt) $ by  $\min_{\xi \in D_i}    B\lt( a_{i-1} , a_i, t , \xi,   \eta_t \rt) $. We stop constructing $s_k$ if $s_k >   150$. We denote by $k_t \in Z$ the subindex of the last term and then $s_{k_t} > 150$.

\paragraph{The remaining part}
The remaining part can be arbitrary small if we construct $s_k$ over a large range and calculate large $t$ in the last step numerically. For $2.4 \leq t \leq 15$, we use the above procedure to estimate $P(\sup_{  \lf  tm \rf < i  \leq 150 m  } $ $  V_i^+ / ( V_i^- + m ) >t  )$. 
To estimate the remaining part $P(\sup_{ 150 m< i   } $ $  V_i^+ / ( V_i^- + m ) >t )$ , we choose $\td{s}_i \teq  s_{k_t -150+ i + 1} = i$ for $ i  \geq 150$.
From \eqref{eq:bxy}, we know
 \vspace{0.05in}
\begin{align}
B\lt( \td{s}_i, \td{s}_{i+1},t, \xi,   t   \rt)  
 =&  \exp\lt( -\xi \lt( \f{ t i - i - 1} {2} + t   \rt) \rt)   \f{  e^{\xi /2} + e^{  - \xi / 2}  }{2}       \cdot \lt( \f{   e^{ (1+t) \xi / 2 } +e^{   -(1+t) \xi / 2}  }{2}   \rt)^{ i}   \notag \\
 =& e^{-\xi t} \cdot \f{ e^{\xi} + 1 }{2} \lt( \f{ e^{\xi}   + e^{ -t \xi} }{2}    \rt)^i  = c_{\xi} \cdot e^{-t\xi} \cdot  r(t, \xi)^i \; ,
\label{eq:rem}
 \vspace{0.05in}
\end{align}
where $ r(t, \xi) \teq \f{ e^{\xi}   + e^{ -t \xi} }{2}$. We can choose $\xi \in (0,1]$ such that $ r(t,\xi) < 1 -\e$ uniformly for $t  \geq 2.4$ and some $\e > 0$. It follows that the tail $B\lt( \td{s}_i, \td{s}_{i+1},t, \xi, t   \rt) $ decays exponentially fast with respect to $i , t$. To estimate $P(\sup_{    150 m  < i  }    V_i^+ / ( V_i^- + m )> t)$, we choose $\xi = 0.2$ for $ t \in [2.4, 15]$ and obtain $r(\xi, t) <0.93$.

For $t > 15$, we choose $s_i = t + i-1 $ for $ i \geq 1$, $\xi = 0.5$ and yield $r(t, \xi) < 0.83$. Note that \eqref{eq:rem} still holds true after replacing $(\td{s}_i, \td{s}_{i+1},i)$ by $(s_i, s_{i+1}, s_i)$. Thus we can estimate the distribution function $P \lt( \sup_{ i  >  \lf tm \rf   }V_i^+ /  (V_i^- + m )   >  t \rt) $ in \eqref{eq:cdf2} directly, which decays exponentially fast with respect to $t$.

After obtaining the upper bound of the distribution function for $t_i = 2.4 + 0.005 i \in [2.4,15]$ and any $t > 15$ , we use the monotonicity of the distribution function and integrate \eqref{eq:cdf00} to conclude
\[
 E \lt[ \sup_{ i \geq 1} \f{V_i^+}{ V_i^- + m }    \rt]
 = \int_0^{\infty}   P \lt( \sup_{ i  >  \lf tm \rf   } \f{V_i^+}{ V_i^- + m }   >  t \rt)  dt
 \leq 2.4 + \int_{2.4}^{\infty}  P \lt( \sup_{ i  >  \lf tm \rf   } \f{V_i^+}{ V_i^- + m }   >  t \rt)  dt   \leq 3.9 \; .
\]

 \paragraph{Verification of the construction \eqref{pko:con4} of $\td{X}$.}
 Direct calculations show that
  \[
 \bal
 (X - \td{X})^T(X - \td{X}) &= [ (2 X \Sigma^{-1} - 2 \UU\CC)  \BB^{-1}]^T  [ (2 X \Sigma^{-1} - 2 \UU\CC) \BB^{-1}] \\
 &=4  \BB^{-1}  (  \Sigma^{-1} X^TX \Sigma^{-1} +  \CC^T \UU^T \UU \CC ) \BB^{-1} = 4  \BB^{-1} (  \Sigma^{-1} + \CC^T \CC )\BB^{-1} = 4 \BB^{-1} \; .\\
 (X +\td{X})^T(X - \td{X}) &=  [ X ( 2\II - 2 \Sigma^{-1} \BB^{-1})  +2  \UU \CC \BB^{-1}]^T [ 2 X \Sigma^{-1} \BB^{-1}- 2 \UU\CC \BB^{-1}] \\
&  = 4 (\II - \Sigma^{-1} \BB^{-1} )^T X^T X \Sigma^{-1} \BB^{-1} - 4 \BB^{-1} \CC^T\UU^T\UU \CC \BB^{-1} \\
&  = 4 (\II - \BB^{-1} \Sigma^{-1}) \BB^{-1} - 4 \BB^{-1} C^T C \BB^{-1} \\
& = 4 (\II - \BB^{-1} \Sigma^{-1}) \BB^{-1}  -4 \BB^{-1}(\BB - \Sigma^{-1}) \BB^{-1}  =0 \; .
 \eal
 \]
 Here we use $\UU^T X = X^T \UU = 0$. The first identity implies \eqref{pko:con0} and the second is exactly \eqref{pko:ort}.

\section{Proof of  Lemma \ref{lem:cov}}
 Conditional on $\eta$, we can determine $N_{\eta} =\{ j \in S_0: W_j \ne 0\}$. Recall that $\xi$ and $\eta$ are independent and $\xi_{S_0} \overset{d}{=} -\xi_{S_0}$. We have
 $E(\one_{\xi_i > 0} | \eta) =E(\one_{\xi_i > 0}) =  1/2, i\in S_o$. For any $i, j \in N_{\eta}$, we get
 \[
    E(\one_{\xi_i > 0} \one_{\xi_j<0})  - \frac{1}{4} +    E(\one_{\xi_i > 0} \one_{\xi_j>0})  -\frac{1}{4}  =  E(\one_{\xi_i > 0}) -\frac{1}{2}=0 \; .
 \]
 Similarly, we have $E(\one_{\xi_i < 0} \one_{\xi_j > 0})  - \frac{1}{4}= -( E(\one_{\xi_i > 0} \one_{\xi_j>0})  -\frac{1}{4} ) , \forall i, j\in S_0$. Meanwhile, the symmetry of $\xi_{S_0}$ implies $ E(\one_{\xi_i < 0} \one_{\xi_j < 0}) =  E(\one_{\xi_i > 0} \one_{\xi_j>0})$. Therefore, we obtain
  \beq\label{cond:fdr1}
  \bal
 & \Cov(Y_i ,Y_j | \eta) = E(Y_i Y_j | \eta )  - E(Y_i |\eta) E(Y_j |\eta) = E(Y_i Y_j | \eta)  - \frac{1}{4} \\
  =& E[    (  \one_{f_i(\eta) >0} \one_{\xi_i >0} +  \one_{f_i(\eta) <0} \one_{\xi_i <0}    ) \cdot (       \one_{f_j(\eta) >0} \one_{\xi_j >0} +  \one_{f_j(\eta) <0} \one_{\xi_j<0}           )        | \eta     ]-\frac{1}{4} \\[1mm]
  =& \one_{f_i(\eta) >0}   \one_{f_j(\eta) >0}  \left[ E(   \one_{\xi_i >0}   \one_{\xi_j >0}      ) - \frac{1}{4} \right] +  \one_{f_i(\eta) >0}   \one_{f_j(\eta) <0}  \left[ E(   \one_{\xi_i >0}   \one_{\xi_j <0}      ) - \frac{1}{4}  \right] \\[1mm]
   &\quad  +  \one_{f_i(\eta) <0}   \one_{f_j(\eta) >0}  \left[ E(   \one_{\xi_i <0}   \one_{\xi_j >0}      ) - \frac{1}{4}  \right] +  \one_{f_i(\eta) <0}   \one_{f_j(\eta) <0}  \left[ E(   \one_{\xi_i <0}   \one_{\xi_j <0}      ) - \frac{1}{4}  \right] \\[1mm]
   =& ( E(\one_{\xi_i > 0} \one_{\xi_j>0})  -\frac{1}{4}   ) (  \one_{f_i(\eta) >0, f_j(\eta) >0  } -    \one_{f_i(\eta) >0,f_j(\eta) <0 }   -\one_{f_i(\eta) <0 , f_j(\eta) >0}+          \one_{f_i(\eta) <0 ,f_j(\eta) <0}    )   \\[1mm]
   =& ( E(\one_{\xi_i > 0} \one_{\xi_j>0})  -\frac{1}{4}   )  (  \one_{f_i(\eta) >0} -    \one_{f_i(\eta) <0}  )(  \one_{f_j(\eta) >0} -    \one_{f_j(\eta) <0}  )
 = ( E(\one_{\xi_i > 0} \one_{\xi_j>0})  -\frac{1}{4}   ) w_iw_j,
 \eal
 \vspace{0.05in}
 \eeq
 where $ w_i \triangleq  \one_{f_i(\eta) >0} -    \one_{f_i(\eta) <0}  $. By definition, $w_i = 1$ or $-1$. From $\Cov(\xi) = \BB = 2\Sigma^{-1}$, we know
 that
 $
 \left(\begin{array}{c}
 \xi_i \\
 \xi_j \\
 \end{array}
 \right)
 \sim
 N \left(
 0,
 \left(
 \begin{array}{cc}
 \BB_{ii} & \BB_{ij} \\
 \BB_{ji} & \BB_{jj} \\
 \end{array}
 \right)
 \right) . $
 Since normalizing $\xi_i,\xi_j$ does not change their sign, we  assume that
 $\left(
 \begin{array}{c}
 \xi_i \\
 \xi_j \\
 \end{array}
 \right)
 \sim
 N \left(
 0,
 \left(
 \begin{array}{cc}
 1 & \mu_{ij} \\
 \mu_{ij} & 1 \\
 \end{array}
 \right)
 \right) , $
 where $ \mu_{ij} = \BB_{ij} / ( \BB^{1/2}_{ii}  \BB^{1/2}_{jj}) = (\wt{\Sigma^{-1}})_{ij}$ (see \eqref{def:nota2}).
 Define $ \mu = \mu_{ij} =( \wt{\Sigma^{-1}})_{ij}$
 and let $P(\xi_i, \xi_j)$ and $P_s(\cdot)$ be the probability distribution function of $(\xi_i ,\xi_j)$ and the standard normal distribution, respectively.
 Using
 \[
 0 \leq e^x - 1 -x \leq \frac{x^2}{2} (e^x  1_{x>0} + 1) , \quad x \triangleq   -\frac{\mu^2\xi_i^2 + \mu^2\xi_j^2 - 2\mu \xi_i \xi_j}{2(1-\mu^2)} ,
 \]
we expand $P(\xi_i, \xi_j) -  P_s(\xi_i) P_s(\xi_j) $ up to $\mu^2$
 \[
 \bal
 \left[ P(\xi_i, \xi_j) -  P_s(\xi_i) P_s(\xi_j)\right]  w_i w_j =& \frac{  P_s(\xi_i) P_s(\xi_j)}{\sqrt{1-\mu^2}}
 (  1+ x -\sqrt{1-\mu^2}  + e^x -1 -x )w_i w_j  \\[1mm]
 \leq & \frac{  P_s(\xi_i) P_s(\xi_j)}{\sqrt{1-\mu^2}}(   ( 1+ x -\sqrt{1-\mu^2} )w_i w_j  + e^x -1 -x )   \\[1mm]
  \leq & \frac{  P_s(\xi_i) P_s(\xi_j)}{\sqrt{1-\mu^2}} \lt(   ( 1+ x -\sqrt{1-\mu^2} )w_i w_j  +\frac{x^2 (e^x \one_{x>0} + 1)  }{2}\rt)  \\[1mm]
  =&  \frac{  P_s(\xi_i) P_s(\xi_j)}{\sqrt{1-\mu^2}} \left(  ( 1+ x -\sqrt{1-\mu^2} )w_i w_j+  \frac{x^2 }{2} \right)+ P(\xi_i, \xi_j) \frac{x^2  \one_{x>0} }{2} \; .
  \eal
 \]
 Integrating both sides with respect to $\xi_i,\xi_j$ in the region $\xi_i,\xi_j >0$ gives
 \begin{align}
 \left(E(\one_{\xi_i>0}\one_{\xi_j >0})-\frac{1}{4}\right) w_i w_j& \leq \int_{\xi_i, \xi_j>0} \frac{  P_s(\xi_i) P_s(\xi_j)}{\sqrt{1-\mu^2}} \left(  ( 1+ x -\sqrt{1-\mu^2} )w_i w_j+  \frac{x^2 }{2} \right)d\xi_i d\xi_j \notag \\[1.2mm]
 &+ \int_{\xi_i>0, \xi_j>0}  P(\xi_i, \xi_j) \frac{x^2  \one_{x>0} }{2} d\xi_i d\xi_j  \triangleq \II + \II\II + \II\II\II \; .\label{pko:cov2_0}
 \end{align}
 Since $P_s(\cdot)$ is a standard Gaussian distribution and $x =   -\frac{\mu^2\xi_i^2 + \mu^2\xi_j^2 - 2\mu \xi_i \xi_j}{2(1-\mu^2)} $, we can calculate all the moments in $\II, \II\II$ explicitly. For $\II$, we have
 \beq\label{pko:cov2_1}
 \bal
 \II &= \left( \frac{1}{4} (\frac{1}{\sqrt{1-\mu^2} }-1)  + \frac{\mu}{2\pi ( 1-\mu^2)^{3/2}  } - \frac{1}{4} \frac{\mu^2}{  ( 1-\mu^2)^{3/2 } }  \right) w_i w_j\\
 &\leq  \frac{\mu w_i w_j} {2\pi} + \Big| \frac{\mu}{2\pi} (  (1-\mu^2)^{-3/2} -1  )  \Big| + \Big|    \frac{1}{4} (\frac{1}{\sqrt{1-\mu^2} }-1)    - \frac{1}{4} \frac{\mu^2}{  ( 1-\mu^2)^{3/2 } }   \Big|  \\ 
  &= \frac{\mu w_i w_j} {2\pi} + \frac{\mu^2}{2\pi} \Big| \frac{    (1-\mu^2)^{-3/2} -1     }{   \mu}\Big|
  +  \frac{\mu^2}{4\sqrt{1-\mu^2} } \Big|  \frac{1}{1-\mu^2} - \frac{1}{1+\sqrt{1-\mu^2}} \Big| \triangleq  \frac{\mu w_i w_j} {2\pi} + c_1(\mu) \mu^2, \\
  \eal
 \eeq
 where $c_1(\mu) \geq 0$ collects the coefficients of $\mu^2$ and is bounded near $\mu =0$. We use $E(\xi \one_{\xi >0}) = 1/\sqrt{2\pi}, E(\xi^2 \one_{\xi >0})  = \frac{1}{2}$ for the standard Gaussian $\xi$ to obtain the first equality, and $|w_i| =|w_j| =1$ to obtain the inequality. 
 For the second term, we get
 \beq\label{pko:cov2_2}
 \II\II =\frac{\mu^2}{8(1-\mu^2)^{5/2}} \int_{\xi_i, \xi_j>0} P_s(\xi_i) P_s(\xi_j) (2\xi_i \xi_j -\mu(\xi_i^2 +\xi_j^2))^2 d\xi_i d\xi_j = \frac{\mu^2(1  - \frac{8}{\pi} \mu + 2 \mu^2) }{8(1-\mu^2)^{5/2}} \triangleq c_2(\mu) \mu^2,
 \eeq
 where $c_2(\mu) =  \frac{(1  - \frac{8}{\pi} \mu + 2 \mu^2) }{8(1-\mu^2)^{5/2}}  \geq 0$ is bounded near $\mu =0$. Since $\xi_i,\xi_j >0$ and $x =  -\frac{\mu^2\xi_i^2 + \mu^2\xi_j^2 - 2\mu \xi_i \xi_j}{2(1-\mu^2)}$,  $\mu \leq 0$ implies $x \leq 0$, or equivalently $\one_{x > 0} \leq \one_{\mu > 0}$. Note that
 \[
 x =  -\frac{\mu^2\xi_i^2 + \mu^2\xi_j^2 - 2\mu \xi_i \xi_j}{2(1-\mu^2)} \leq  -\frac{ 2\mu^2\xi_i\xi_j  - 2| \mu| \xi_i \xi_j}{2(1-\mu^2)} =   \frac{ |\mu|  \xi_i \xi_j}{(1 + |\mu| )}, \ \forall \  \xi_i, \xi_j >0.
  \]
 For $\xi_i ,\xi_j > 0$, we have
  $ x^2 \one_{x >0} \leq  \left( \frac{ | \mu| \xi_i \xi_j  }{1+ |\mu|}\right)^2 \one_{\mu >0} .$
 Therefore, we obtain
 \beq\label{pko:cov2_3}
 \bal
 \II \II \II &= \frac{  1  }{2}  E(x^2 \one_{x>0} \one_{\xi_i ,\xi_j >0} ) \leq  \frac{1}{2} \frac{|\mu^2 \one_{\mu >0} |}{ (1+|\mu|)^2} E(\xi_i^2 \xi_j^2 \one_{\xi_i,\xi_j >0}) \\[1mm]
 &\leq  \frac{1}{2} \frac{|\mu^2 \one_{\mu >0} |}{ (1+|\mu|)^2} ( E(\xi_i^4 \one_{\xi_i >0})E(\xi_j^4 \one_{\xi_j >0}) )^{1/2} =  \frac{1}{2} \frac{|\mu^2 \one_{\mu >0} |}{ (1+|\mu|)^2} \frac{3}{2} \triangleq \one_{\mu>0}c_3(\mu) \mu^2,
 \eal
 \eeq
 where $c_3(\mu) = \frac{3}{4(1+ |\mu|)^2 }$ is bounded near $\mu =0$. Combining \eqref{pko:cov2_0}, \eqref{pko:cov2_1}, \eqref{pko:cov2_2} and \eqref{pko:cov2_3}, we yield
 \[
 \Cov(Y_i, Y_j| \eta) = \left[E(\one_{\xi_i>0}\one_{\xi_j >0})-\frac{1}{4}\right] w_i w_j \leq \frac{\mu}{2\pi} w_i w_j + (c_1(\mu) + c_2(\mu)  + c_3(\mu) \one_{\mu >0}) \mu^2 \triangleq  \frac{\mu}{2\pi} w_i w_j + c(\mu) \mu^2.
 \]
 Here, $c(\mu) =c_1(\mu) + c_2(\mu)  + c_3(\mu) \one_{\mu >0} $. Since $c_i(\mu)$ is a non-negative and an explicit function of $\mu$, it is not difficult to show that $c(\mu) < \frac{3}{2}$ for $|\mu | < \frac{1}{2}$. For $| \mu| > 1/2$, we use the estimate
 $
 \Cov(Y_i, Y_j| \eta) \leq 1/4 \leq - \f{\mu}{2\pi} + \frac{3}{2} \mu^2.
 $
 Finally, we conclude
 \[
 \Cov(Y_i, Y_j| \eta) \leq  \left(\frac{\mu}{2\pi} w_i w_j + c(\mu) \mu^2 \right)\wedge \frac{1}{4} \leq    \frac{\mu}{2\pi} w_i w_j  + \frac{3}{2}\mu^2 \; ,
 \]
 where $\mu = ( \wt{\Sigma^{-1}})_{ij}$. This proves Lemma \ref{lem:cov}.

 \vspace{0.2in}
  \noindent
  {\bf Acknowledgements.}
 The research was in part supported by NSF Grants DMS 1318377 and DMS 1613861. The research of JC was performed during his visit to ACM at Caltech.  We would like to thank Professor Emmanuel Candes for his many valuable comments and suggestions to our work. We would also like to thank Professor Lucas Janson for his interest and comments on the earlier version of this manuscript and Dr. Pengfei Liu for the discussions on the pseudo knockoff.

 \end{document}